\newtheorem{theorem}{Theorem}
\newtheorem{lemma}{Lemma}
\newtheorem{proposition}{Proposition}
\newcommand{\pro}{\text{{\LARGE $\sqcap$}}}
\newcommand{\changes}[1]{#1}
\newtheorem{remark}{Remark}
\title{Analyzing Approximate Value Iteration Algorithms
}
\author{Arunselvan Ramaswamy
\thanks{Heinz Nixdorf Institute and the Department of Computer Science, Paderborn University, 33102 Paderborn, Germany
        \texttt{arunr@mail.upb.de}} \\
 Shalabh Bhatnagar
\thanks{Dept. of Computer Science and Automation and the Robert Bosch Center for Cyber-Physical Systems,
Indian Institute of Science, Bengaluru - 560012, India.
        \texttt{shalabh@iisc.ac.in}}
}
\begin{document}
\maketitle

\begin{abstract}%
In this paper, we consider the stochastic iterative counterpart of the value iteration scheme wherein only noisy and possibly biased approximations of the Bellman operator are available. We call this counterpart as the approximate value iteration (AVI) scheme. Neural networks are often used as function approximators, in order to counter Bellman's curse of dimensionality. In this paper, they are used to approximate the Bellman operator. Since neural networks are typically trained using sample data, errors and biases may be introduced.
The design of AVI accounts for implementations with biased approximations of the Bellman operator and sampling errors. 
We present verifiable sufficient conditions under which AVI is stable (almost surely bounded) and converges to a fixed point of the approximate Bellman operator. To ensure the stability of AVI, we present three different yet related sets of sufficient conditions that are based on the existence of an appropriate Lyapunov function. These Lyapunov function based conditions are easily verifiable and new to the literature. The verifiability is enhanced by the fact that a recipe for the construction of the necessary Lyapunov function is also provided. We also show that the stability analysis of AVI can be readily extended to the general case of set-valued stochastic approximations. Finally, we show that AVI can also be used in more general circumstances, i.e., for finding fixed points of contractive set-valued maps.
\end{abstract}%

\maketitle

%


%
%
%

\section{Introduction}
 \label{sec_intro}
 Reinforcement Learning (RL) is a machine learning paradigm which consists of one or more agents that interact with an environment by taking actions. These actions have consequences for the agent(s) and affect the state of the environment, which is modelled as a Markov decision process (MDP). The typical goal of a reinforcement learning problem is to solve a given sequential decision making problem. Popular RL algorithms include Approximate Value Iteration, Q-Learning, Temporal Difference Learning and Actor-Critic algorithms. Although based on the dynamic programming principle, they differ from traditional dynamic programming algorithms as they do not assume complete knowledge of the MDP. RL algorithms are versatile as they use sample-based methods that work with noisy observations of objective functions such as Value function and Q-factors.
 
Traditional RL algorithms are not readily applicable to decision making problems that involve large state and action spaces, since they suffer from Bellman's curse of dimensionality. This curse is lifted when these algorithms are integrated with artificial neural networks (ANN). Traditional RL algorithms have seen a major resurgence in recent years, as they are used in combination with ANN to solve problems arising in wide ranging applications including transportation, health-care and finance. Such algorithms are popularly referred to as Deep Reinforcement Learning (Deep RL) algorithms. Their popularity was triggered when ``better than'' human performance was demonstrated, in playing Atari and Go, using RL algorithms, \cite{mnih2015human} \cite{silver2017mastering}. The popular roles of ANN within the framework of RL include the approximation of the objective function (Value function, Q-Value Function, etc.) and extraction of relevant features. Deep RL is highly effective owing to recent advances in the field of RL, as well as the benefits resulting from the use of neural network architectures made possible due to significantly enhanced computational capacities. While Deep RL algorithms are empirical miracles, there is very little theory to back them up. In this paper we aim to address this issue.

The main caveat with using neural networks for function approximation is that their architecture needs to be chosen without complete knowledge of the function to be approximated. In fact, there need not exist network weights (hyper-parameters) such that the corresponding network output equals the objective function, everywhere. Further, since the neural networks are trained in an online manner, starting from a random initialization, the transient approximation errors may be large. From the previous argument, it is clear that one cannot expect that these errors would vanish asymptotically. In other words, to understand the long-term behavior of Deep RL algorithms, it is important to consider the effect of these persistent errors. Further, non-diminishing errors have a strong bearing on the stability (almost sure boundedness) of the algorithm at hand. It is well known in the literature that  RL algorithms with unbounded approximation errors are not stable \cite{BertsekasBook}. Hence, the weakest condition under which Deep RL algorithms can be expected to be stable is when the errors are asymptotically bounded. \textit{In this paper, we prove stability under the aforementioned weak condition for Value Iteration algorithms that use the approximate Bellman operator.}

Traditional dynamic programming algorithms require taking expectations, to calculate the value function or the Q-factors. However, modern Deep RL algorithms are applied to scenarios wherein calculating expectations is impossible. Instead, one needs to work with ``data samples''. While the algorithms that can use samples are typically easy to implement, sampling errors often affect stability and convergence properties. In this paper, \textit{we characterize the limiting set of the approximate value iteration algorithm wherein the additive errors that arise from sampling are biased}. \textit{We believe that ours is one of the first results in this direction}.

\subsection{Approximate Value Iteration Algorithm}
Value iteration methods are an important class of RL algorithms that are easy to implement. However, for many important applications, these methods suffer from the \textit{Bellman's curse of dimensionality}. 
In this paper, we present Approximate Value Iteration (AVI) as a means to address the Bellman's curse of dimensionality by
introducing an \textit{approximate Bellman operator} within a classical value iteration framework. We assume that the Bellman operator is approximated in an online manner. Hence, the approximation errors vary over time, albeit in an asymptotically bounded manner.
As stated earlier, if the approximation errors are allowed to be unbounded
 then the algorithm may not converge, see \cite{BertsekasBook} for details. 
 AVIs with bounded approximation errors have been previously studied in
 \cite{BertsekasBook, vroy, munos}. 
 Bertsekas and Tsitsiklis \cite{BertsekasBook} studied scenarios wherein the approximation errors are
 uniformly bounded over all states.
 Munos \cite{munos} extended the analysis of \cite{BertsekasBook}, allowing for approximation errors
 that are bounded in the weighted p-norm sense, for the infinite horizon discounted cost problem.
  In addition to a convergence analysis, \cite{munos} also provides a rate of convergence (finite time) analysis, under the assumption
 that the transition probabilities or future state distributions
 be ``smooth'', among others.

In this paper, we consider the following AVI algorithm:
\begin{equation}
 \label{intro_avi}
 J_{n+1} = J_n + a(n) \left[ TJ_n - J_n + \epsilon_n + M_{n+1} \right],
\end{equation}
where $J_n$ is the current estimate of the optimal cost-to-go vector; $\{a(n)\}_{n \ge 0}$ is the step-size sequence; $T$ is the Bellman operator; $\epsilon_n$ is the approximation error at time $n$; $M_{n+1}$ is a square integrable Martingale difference sequence to account for sampling errors. In Section~\ref{sec_avi} we show that the structure of \eqref{intro_avi} encompasses two important variants that are relevant to the deep RL paradigm. The first variant allows for the use of a (neural network based) function approximation of the Bellman operator, say $AT$, such that the approximation errors are possibly biased. Within the context of deep RL, the neural network is trained in an online manner using a time-varying loss function such as
$
\left[ ATJ_n(s_n) - (r(s_n, a)  + \gamma J_n(s_{n+1})) \right]^2
$ at time $n$. As in Deep Q-Learning, it may be wise to sample mini-batches from an experience replay \cite{mnih2015human}. The second variant allows for the use of sampling, instead of taking expectations. As before, the sampling errors are allowed to be stochastic and biased. For a further discussion on these variants, the reader is referred to Section~\ref{sec_avi}.

An important contribution of this paper is in the weakening of assumptions involved in the analysis of \eqref{intro_avi}. For e.g., we do not require the previously mentioned restriction on transition probabilities of future distributions 
 (cf. Section~\ref{sec_avi}). As a consequence, we only present an asymptotic analysis and not a stronger finite sample analysis.
 However, our analysis encompasses both the 
 stochastic shortest path and the discounted cost infinite horizon problems.
With regards to the stability of \eqref{intro_avi}, it is not immediately clear whether the
 \textit{approximation operator} or the \textit{sampling errors} influence it in a negative way.
 Thus, another important contribution
 of this paper is in proving the stability of AVI under standard assumptions from literature. It should also be noted that we characterise the optimality of the limiting cost-to-go vector found by AVI. Specifically, we show that it is a fixed point of the ``perturbed Bellman operator'' and belongs to a small neighborhood of $J^*$. We further relate the size of this neighborhood to the asymptotic norm-bound on the approximation errors.
   
\subsection{Stochastic Approximation Algorithms}
To develop the sufficient conditions for the convergence of AVI, and for its analysis, we build on tools from the fields of stochastic approximation algorithms (SAs) and viability theory. SAs are an important class of model-free iterative algorithms that are used to develop and analyze algorithms for stochastic control and optimization.
There is a long and rich history to research in the field of SAs, see \cite{robbins} \cite{Benaim96} \cite{BenaimHirsch}  \cite{Borkartt} \cite{Borkar99} \cite{Benaim05}. SAs encompass both the algorithmic and theoretical aspects. Viability theory plays a major role in the latter. For more details on viability theory the reader is referred to \cite{Aubin}. While the main focus of this paper is to understand the long-term behavior of AVI, we deviate from this theme a little and present Lyapunov function based stability 
conditions for set-valued SAs. In other words, \textit{we believe that our stability analysis is readily applicable, verbatim, to  set-valued SAs}.
The analyses presented herein build on the works of \cite{abounadi} and \cite{ramaswamy}. 

\subsection{Fixed point finding algorithms for contractive set-valued maps}
The ideas used to analyze AVI are later used to develop and analyze a SA  for finding fixed points of set-valued maps.
 Fixed point theory is an active area of research due to
 its applications in a multitude of disciplines. 
 Recently, the principle of dynamic programming (DP) was generalized by Bertsekas \cite{AbstractDP} to solve problems which, previously, could not be solved using classical DP. This extension involved a new abstract definition of the Bellman operator. The theory thus developed is called Abstract Dynamic Programming. An integral component of this new theory involves showing that the solution to the abstract Bellman operator is its fixed point. We believe that the results of Section \ref{sec_fp} are helpful in solving problems that can be formulated as an  Abstract Dynamic Program.
 Our contribution on this front is in the development and analysis of a
 \textit{SA for finding fixed points of contractive 
 set-valued maps,
 } see Section~\ref{sec_fp} for details. As
 mentioned before, we show that such algorithms are bounded almost surely and that they converge to a 
 sample path dependent fixed point of the set-valued map under consideration.
 \textit{To the best of our knowledge
ours is the first SA, complete with analysis, for finding fixed points of set-valued maps}.
\section{Definitions and Notations} \label{sec_def}
Key definitions and notations encountered in this paper are listed in this section.
{\renewcommand\labelitemi{}
\begin{itemize}
 \item \textbf{\textbf{[Upper-semicontinuous map]}} We say that $H$ is upper-semicontinuous,
  if given sequences $\{ x_{n} \}_{n \ge 1}$ (in $\mathbb{R}^{d_1}$) and 
  $\{ y_{n} \}_{n \ge 1}$ (in $\mathbb{R}^{d_2}$)  with
  $x_{n} \to x$, $y_{n} \to y$ and $y_{n} \in H(x_{n})$, $n \ge 1$, 
  then $y \in H(x)$.
\item
\textbf{\textbf{[Marchaud Map]}} A set-valued map $H: \mathbb{R}^{d_1} \to \{subsets\ of\ \mathbb{R}^{d_2}$\} 
is called \textit{Marchaud} if it satisfies
the following properties:
 \textbf{(i)} for each $x$ $\in \mathbb{R}^{d_1}$, $H(x)$ is convex and compact;
 \textbf{(ii)} \textit{(point-wise boundedness)} for each $x \in \mathbb{R}^{d_1}$,  
 $\underset{w \in H(x)}{\sup}$ $\lVert w \rVert$
 $< K \left( 1 + \lVert x \rVert \right)$ for some $K > 0$;
 \textbf{(iii)} $H$ is \textit{upper-semicontinuous}. \\
Let $H$ be a Marchaud map on $\mathbb{R}^d$.
The differential inclusion (DI) given by
\begin{equation} \label{di}
\dot{x} \ \in \ H(x),
\end{equation}
is then guaranteed to have at least one solution that is absolutely continuous. 
The reader is referred to \cite{Aubin} for more details.
We say that $\textbf{x} \in \sum$ if $\textbf{x}$ 
is an absolutely continuous map that satisfies (\ref{di}).
The \textit{set-valued semiflow}
$\Phi$ associated with (\ref{di}) is defined on $[0, + \infty) \times \mathbb{R}^d$ as: \\
$\Phi_t(x) = \{\textbf{x}(t) \ | \ \textbf{x} \in \sum , \textbf{x}(0) = x \}$. Let
$B \times M \subset [0, + \infty) \times \mathbb{R}^d$ and define
\begin{equation}\nonumber
 \Phi_B(M) = \underset{t\in B,\ x \in M}{\bigcup} \Phi_t (x).
\end{equation}
\item \textbf{\textbf{[Limit set of a solution \& $\omega$-limit-set]}} The limit set of a solution $\textbf{x}$
with $\textbf{x}(0) = x$ is given by
$L(x) = \bigcap_{t \ge 0} \ \overline{\textbf{x}([t, +\infty))}$.
Let $M \subseteq \mathbb{R}^d$, the $\omega$-limit-set be defined by
$
 \omega_{\Phi}(M) = \bigcap_{t \ge 0} \ \overline{\Phi_{[t, +\infty)}(M)}.
$
\item \textbf{\textbf{[Invariant set]}}
$M \subseteq \mathbb{R}^d$ is \textit{invariant} if for every $x \in M$ there exists 
a trajectory, $\textbf{x} \in \sum$, entirely in $M$
with $\textbf{x}(0) = x$, $\textbf{x}(t) \in M$,
for all $t \ge 0$. Note that the definition of invariant set used in this paper, is the same as
that of positive invariant set used in \cite{Benaim05} and \cite{BorkarBook}.
\item \textbf{\textbf{[Open and closed neighborhoods of a set]}}
Let $x \in \mathbb{R}^d$ and $A \subseteq \mathbb{R}^d$, then
$d(x, A) : = \inf \{\lVert x- y \rVert \ | \ y \in A\}$. We define the $\delta$-\textit{open neighborhood}
of $A$ by $N^\delta (A) := \{x \ |\ d(x,A) < \delta \}$. The 
$\delta$-\textit{closed neighborhood} of $A$ 
is defined by $\overline{N^\delta} (A) := \{x \ |\ d(x,A) \le \delta \}$.
The open ball of radius $r$ around the origin is represented by $B_r(0)$,
while the closed ball is represented by $\overline{B}_r(0)$.
\item \textbf{\textbf{[Internally chain transitive set]}}
$M \subset \mathbb{R}^{d}$ is said to be
internally chain transitive if $M$ is compact invariant and for every $x, y \in M$,
$\epsilon >0$ and $T > 0$ we have the following: There exists $n$ and $\Phi^{1}, \ldots, \Phi^{n}$ that
are $n$ solutions to the differential inclusion $\dot{x}(t) \in H(x(t))$,
points $x_1(=x), \ldots, x_{n+1} (=y) \in M$
and $n$ real numbers 
$t_{1}, t_{2}, \ldots, t_{n}$ greater than $T$ such that: $\Phi^i_{t_{i}}(x_i) \in N^\epsilon(x_{i+1})$ and
$\Phi^{i}_{[0, t_{i}]}(x_i) \subset M$ for $1 \le i \le n$. The sequence $(x_{1}(=x), \ldots, x_{n+1}(=y))$
is called an $(\epsilon, T)$ chain in $M$ from $x$ to $y$.
\item \textbf{\textbf{[Attracting set \& fundamental neighborhood]}}
$A \subseteq \mathbb{R}^d$ is \textit{attracting} if it is compact
and there exists a neighborhood $U$ such that for any $\epsilon > 0$,
$\exists \ T(\epsilon) \ge 0$ with $\Phi_{[T(\epsilon), +\infty)}(U) \subset
N^{\epsilon}(A)$. Such a $U$ is called the \textit{fundamental neighborhood} of $A$. 
In addition to being compact if the \textit{attracting set} is also invariant then
it is called an \textit{attractor}.
The \textit{basin
of attraction } of $A$ is given by $B(A) = \{x \ | \ \omega_\Phi(x) \subset A\}$.
\item \textbf{[Global attractor]} If the basin of a given attractor is $\mathbb{R}^d$, then the attractor is called global attractor.
\item \textbf{[Globally asymptotically stable equilibrium point]} A point $x_0$ is an equilibrium point for the DI \eqref{di}, if $0 \in H(x_0)$. Further, it is globally asymptotically stable if it is a global attractor. This notion is readily extensible to sets.
\item \textbf{\textbf{[Lyapunov stable]}} The above set $A$ is Lyapunov stable 
if for all $\delta > 0$, $\exists \ \epsilon > 0$ such that
$\Phi_{[0, +\infty)}(N^\epsilon(A)) \subseteq N^\delta(A)$.
\end{itemize}}
\section{Approximate value iteration methods} \label{sec_avi}
Most value iteration methods are based on fixed point finding algorithms. This is because the optimal cost-to-go vector is a fixed point of the Bellman operator.
Since the Bellman operator is contractive with respect to some weighted max-norm $\lVert \cdotp \rVert_\nu$, it follows from fixed point theory that there is a unique fixed point for it. Further, this fixed point is the required optimal cost-to-go vector. 
Suppose $T$ is the Bellman operator, the aim of value iteration methods is to find $J^*$ such that $J^* = TJ^*$. Given a cost-to-go vector $J = (J(s), \ s \in \mathcal{S})^{\footnotesize T}$, let
\begin{equation} \label{avi_bellman}
TJ(s) = \underset{a \in \mathcal{A}}{\max}\ \mathbb{E}_{s' \sim \mathcal{P}} \left[ r(s,a) + \gamma J(s') \mid s, a \right],
\end{equation}
where $\mathcal{A}$ is the action space and $\mathcal{S}$ is the state space with $s, s' \in \mathcal{S}$; $r(s, a)$ is the single-stage cost when action $a$ is chosen in state $s$; $\mathcal{P}$ is the (unknown) transition probability law with $\mathcal{P}(s, a, s')$ being the probability of transition from state $s$ to $s'$ when action $a$ is taken and $0 < \gamma \le 1$ is the discount factor. When $0 < \gamma < 1$ we are in the infinite horizon discounted cost problem setting, and $\gamma = 1$ corresponds to the setting of the infinite horizon stochastic shortest path problem. In this paper we do not distinguish between the two, we shall implicitly work with the appropriate definition of $T$. 

Suppose $T$ can be exactly calculated, then the recursion $J_{n+1} \leftarrow T J_n$ converges to $J^*$ starting from any $J_0$. However, for an exact calculation of $T$, one requires complete knowledge of the transition probability law $\mathcal{P}$ and the reward function $r(\cdotp, \cdotp)$. In many applications this could be a hard requirement to satisfy. In today's deep learning age, \textit{it is common to work with approximations of the Bellman operator}. These approximations are noisy (stochastic) and biased. Below, we present \textbf{Approximate Value Iteration (AVI)}, a stochastic iterative counterpart of traditional value iteration, designed to operate in the presence of noise and approximations:

\begin{equation} \label{avi_algo}
J_{n+1} = J_n + a(n) \left[ ATJ_n - J_n  + M_{n+1} \right],
\end{equation}
where $J_n \in \mathbb{R}^d$ for all $n \ge 0$; $A$ is the approximation operator; $T$ is the Bellman operator, see \eqref{avi_bellman}; $\{a(n)\}_{n \ge 0}$ is the given step-size sequence and $\{M_{n+1}\}_{n \ge 0}$ is the noise sequence. Note that $J_n = (J_n(1), \ldots, J_n(d))^{\footnotesize T}$ for $n \ge 0$ and the state space is given by $\mathcal{S} = \{1, \ldots, d \}$. Let us rewrite \eqref{avi_algo} as the following:

\begin{equation}
 \label{avi_avi}
 J_{n+1} = J_n + a(n) \left[ TJ_n - J_n + \epsilon_n + M_{n+1} \right],
\end{equation}
where $\epsilon_n := ATJ_n - TJ_n$ is the approximation error at stage $n$; $\{M_{n}\}_{n \ge 1}$ is a square integrable Martingale difference noise sequence that is adapted to the filtration $\{ \mathcal{F}_n \}$, defined by $\mathcal{F}_n := \sigma \langle J_m, \epsilon_m \mid m \le n \rangle$, $n \ge 0$. 
\subsection{Asssumptions to analyze AVI} \label{avi_asmp}
Before listing the assumptions required to analyze \eqref{avi_algo}/\eqref{avi_avi}, let us define the weighted max-norm $\lVert \cdotp \rVert_\nu$. Given $\nu = (\nu_1, \ldots, \nu_d)^{\footnotesize T}$ such that
$\nu_i > 0$ for $1 \le i \le d$, let $\lVert x \rVert_\nu = \max \left\{ \frac{|x_i|}{\nu_i}  \mid
1 \le i \le d \right\}$, where $x = (x_1, x_2, \ldots, x_d)^{\footnotesize T} \in \mathbb{R}^d$.
\begin{itemize}
 \item[\textbf{(AV1)}] The Bellman operator $T$ is contractive with respect to some weighted max-norm, 
 $\lVert \cdotp \rVert_\nu$, \textit{i.e.,} $\lVert Tx - Ty \rVert_\nu \le \alpha \lVert x - y \rVert_\nu$
 for some $0 < \alpha < 1$.
 \item[\textbf{(AV2)}] $T$ has a unique fixed point $J^*$ and $J^*$ is the 
 unique globally asymptotically stable equilibrium point of $\dot{J}(t) = TJ(t) - J(t)$.
 \item[\textbf{(AV3)}] Almost surely $\limsup \limits_{n \to \infty}\ \lVert \epsilon_n \rVert_\nu \le \epsilon$, for some fixed $\epsilon > 0$.
 \item[\textbf{(AV4)}] The step-size sequence $\{a(n)\}_{n \ge 0}$ is such that $\forall n, \ a(n) \ge 0$, $\sum_{n \ge 0} a(n) = \infty$ and $\sum_{n \ge 0} a(n)^2 < \infty$.
 \item[\textbf{(AV5)}] $(M_n, \mathcal{F}_n)_{n \ge 1}$ is a square integrable Martingale difference sequence $\bigg( E[M_{n+1} \mid \mathcal{F}_n] = 0$ and $EM_{n+1}^2 < \infty$, $n \ge 0 \bigg)$ such that
$
 E\left[ \lVert M_{n+1} \rVert ^2 \mid \mathcal{F}_n \right] \le K(1 + \lVert J_n \rVert ^2), \text{ where $n \ge 0$ and $K > 0$.}
$. The filtration $\{ \mathcal{F}_n\}_{n \ge 0}$ is as defined above.
\end{itemize}

Now, we briefly discuss the above listed requirements. Assumption $(AV1)$ is standard in literature and is readily satisfied in many applications, see \textit{Section 2.2}
of Bertsekas and Tsitsiklis \cite{BertsekasBook} for details. In Section \ref{sec_ana_avi}, we discuss how (AV2) ensures the stability of AVI. $(AV3)$ requires that the \textbf{stochastic approximation errors are asymptotically bounded in an almost sure sense}. This asymptotic bound is with respect to the weighted max-norm. Later, in Section~\ref{sec_avi1}, we show that the analysis of (\ref{avi_avi}) is unaltered
when the approximation errors are more generally bounded in the weighted p-norm sense 
(weighted Euclidean norms). Let us say that $(AV3)$ is violated. This implies that $\lVert \epsilon_{n(m)} \rVert \uparrow \infty$ along a sequence $\{n(m)\}_{m \ge 0} \subseteq \mathbb{N}$. In words, there is a massive failure in approximating the Bellman operator, and there are points wherein the approximate Bellman operator differs from the true one by large amounts. Therefore, $\sum \limits_{n \ge 0} a(n) \lVert \epsilon_n \rVert$ may equal $\infty$ as a consequence, and we can never guarantee stability. Assumption $(AV3)$ essentially requires that such circumstances be avoided.

On the surface $(AV5)$ concerns the additive noise terms that are modelled as a square integrable martingale difference sequence. However, it also serves the dual role of analyzing the \textbf{``sample-based'' variant of AVI}. To illustrate this, we assume that the single-stage reward function $r(\cdotp, \cdotp)$ is a given deterministic function, and that the algorithm can sample from the transition probability law $\mathcal{P}(s,a,\cdotp)$. If we use the notation $\hat{T}$ to represent sample-Bellman operator, then
\[
\hat{T} J(s) := \max \limits_{a \in \mathcal{A}} \left[ r(s,a) + J(\psi(s,a))\right],
\]
where $\psi(s,a)$ is a random variable that takes values in the state-space $\mathcal{S}$, and is distributed according to $\mathcal{P}(s,a,\cdotp)$. This is the setting of deep learning, and the sample-based variant of AVI, is given by
\begin{equation}
\label{avi_avi_sample}
J_{n+1} = J_n + a(n) \left[ \hat{T} J_n - J_n + \epsilon_n \right].
\end{equation}
If we condition on the current state and action, it is fair to assume that the $\psi(s,a)$ samples required to calculate the RHS of \eqref{avi_avi_sample}, are independent across time. Specifically, if we codify all the samples taken at stage $n$ as $\psi_n$, it follows that $\{\psi_n\}_{n \ge 0}$ is an independent sequence. First, we define the filtration as: $\mathcal{F}_0 = \sigma \langle J_0, \epsilon_0 \rangle$ and $\mathcal{F}_n = \sigma \langle J_m, \epsilon_m, \psi_k \mid m \le n, \ k < n \rangle$, $n \ge 1$. Next, we define a zero-mean square integrable martingale difference sequence as: $M_{n+1} := \left( \hat{T} J_n - J_n - \mathbb{E} \left[ \hat{T} J_n - J_n \mid \mathcal{F}_n \right] \right)$, $n \ge 0$. It follows from the definition of the filtration, that $\mathbb{E} \left[ \hat{T} J_n - J_n \mid \mathcal{F}_n \right] = TJ_n - J_n$ for all $n \ge 0$, where $T$ is the Bellman operator as defined in \eqref{avi_bellman}. Hence, the sample-based AVI \eqref{avi_avi_sample} can be written as \eqref{avi_avi}, with $M_{n+1}$ as above. In other words, \textbf{the AVI algorithm given by \eqref{avi_avi} has a general architecture, and can be used to solve problems involving biased function approximations and noisy samples}.

Since $M_{n+1} = \left( \hat{T} J_n - J_n \right) - \left( TJ_n - J_n \right)$, its component along the dimension associated with state $s$ is given by $\hat{T} J_n(s) - T J_n(s)$. It follows from the definitions of $\hat{T}$ and $T$ that:
\[
\left| \hat{T} J_n(s) - T J_n(s) \right| \le \underset{a \in \mathcal{A}, s' \in \mathcal{S}}{\max} |r(s',a)| + \underset{a \in \mathcal{A}}{\max} \left( \left| J_n(\psi(s,a)) \right| + \mathbb{E}_{s' \sim \mathcal{P}} \left| J_n(s') \right| \right).
\]
Let us use $\lVert \cdotp \rVert_{\infty}$ to represent the max-norm. There exists $K_1 > 0$ such that the above inequality becomes:
\[
\left| \hat{T} J_n(s) - T J_n(s) \right| \le K_1 (1 + \lVert J_n \rVert_\infty).
\]
For non-negative $a, b \in \mathbb{R}$, we have that $(a + b)^2 \le 2(a^2 + b^2)$ (using AM-GM inequality). Hence
\[
\left| \hat{T} J_n(s) - T J_n(s) \right|^2 \le 2K_1^2 (1 + \lVert J_n \rVert_\infty ^2).
\]
As the max-norm is bounded above by the Euclidian norm ($\lVert \cdotp \rVert_\infty ^2 \le \lVert \cdotp \rVert ^2$), we conclude from the above inequality that:
\[
E\left[ \lVert M_{n+1} \rVert ^2 \mid \mathcal{F}_n \right] \le K(1 + \lVert J_n \rVert ^2), \text{ where $K := 2d K_1^2$.}
\]
In other words, the \textbf{sample-based AVI satisfies $(AV5)$}. If we are able to show that $\underset{n \ge 0}{\sup} \ \lVert J_n \rVert < \infty$ a.s., then it follows from $(AV4)$ that the quadratic variation process associated with the Martingale sequence is a.s. bounded. It then follows from the Martingale Convergence Theorem that $\sum \limits_{m=0}^n a(m) M_m$, $n \ge 0$, converges almost surely. If we interpret the martingale difference term $M_{n+1}$ as the sampling error at stage $n$, then we may conclude that the \textbf{sampling errors vanish over time}. 

As stated in the Abstract and the Introduction, we are interested in the stability and convergence analysis of AVI. Before we present this analysis, we take a detour and consider general set-valued SAs, in order to \textbf{present the required (Lyapunov function based) stability assumptions}. Clearly the AVI given by \eqref{avi_avi} can be viewed as appropriate set-valued SA. We present the stability assumptions for the general setting of set-valued SAs, since they can be additionally applied to analyze other algorithms arising in Deep RL and stochastic approximation settings. Following the short detour, we shall return to the analysis of AVI.  
\section{Lyapunov stability assumptions and general set-valued stochastic approximations} \label{sec_sa}
Before presenting the Lyapunov stability conditions, we present the general structure of set-valued SAs. Consider the following iteration in $\mathbb{R}^d$:
\begin{equation} \label{sa_sri}
x_{n+1} = x_n + a(n) \left[ y_n + M_{n+1} \right],
\end{equation}
where $y_n \in H(x_n)$ for all $n \ge 0$ with $H: \mathbb{R}^d \to \{\text{subsets of }\mathbb{R}^d\}$,
$\{a(n)\}_{n \ge 0}$ is the given step-size sequence and $\{ M_{n+1}\}_{n \ge 0}$ is the given noise sequence.
In addition to the required stability assumptions, we list a few that are required to present the stability analysis of \eqref{sa_sri} \footnote{
$(A3a)$ is the required Lyapunov-based stability condition. Two
additional alternative stability conditions, \textit{viz.,} $(A3b)$ and $(A3c)$ will also be presented.
}.
\begin{itemize}
\item[\textbf{(A1)}] $H: \mathbb{R}^d \to \{\text{subsets of }\mathbb{R}^d\}$ is a Marchaud map.
\item[\textbf{(A2)}] \begin{itemize}
			\item[(i)] For all $n \ge 0$, $\lVert M_{n+1} \rVert \le D$, where $D >0$ is some constant.
			\item[(ii)] $\underset{k \to \infty}{\lim}\sum\limits_{n=k}^{m_T(k)} a(n) 
			M_{n+1} =  0$ for all $T$, where $m_T(k) := \min \left\{ m \ge k \mid \sum_{n=k}^m a(n) \ge T \right\}$.
			\end{itemize}
\end{itemize}
Note that the assumption on the Martingale noise terms, $(A2)$, is stronger than the corresponding one in Section~\ref{sec_avi}, i.e., $(AV5)$. We consider this stronger assumption for the sake of clarity in presenting the stability analysis of \eqref{sa_sri}. Later, in Section~\ref{sec_gn}, we show that the aforementioned stability analysis carries forward even under the more general $(AV5)$ instead of $(A2)$. Below, we present a key assumption required to prove the stability of \eqref{avi_avi} and \eqref{sa_sri}. Additionally, we present two different yet related variants of this key assumption, both of which lead to identical analyses. The verification of these assumptions involves the construction of an associated Lyapunov function. A recipe for its construction is discussed in Remark~\ref{lyapunov}. These Lyapunov stability assumptions provide an alternative to the ones in \cite{ramaswamy}. Further, in lieu of Remark~\ref{lyapunov}, we believe that these assumptions are readily verifiable. 

\subsection*{Lyapunov function based stability assumptions}
We begin by recalling the definition of the set-valued semiflow $\Phi$ from Section~\ref{sec_def}. Given a solution $\textbf{x}$ of the DI $\dot{x}(t) \in H(x(t))$, 
$\Phi_t(x) := \{\textbf{x}(t) \ | \ \textbf{x} \in \sum , \textbf{x}(0) = x \}$, and $\sum$ is the set of solutions.
\begin{itemize}
\item[\textbf{(A3a)}] Associated with the differential inclusion (DI) $\dot{x}(t) \in H(x(t))$ 
is a compact set $\Lambda$, a bounded open neighborhood $\mathcal{U}$ 
$\left( \Lambda \subseteq \mathcal{U}\subseteq \mathbb{R}^d \right)$ and a
function $V: \overline{\mathcal{U}} \to \mathbb{R}^+$ such that
	\begin{itemize}
	\item[$(i)$] $\forall t \ge 0$, $\Phi_t (\mathcal{U}) \subseteq \mathcal{U}$ \textit{i.e.,} $\mathcal{U}$ is strongly positively invariant.
	\item[$(ii)$] $V^{-1} (0) = \Lambda$.
	\item[$(iii)$] $V$ is a continuous function such that for all
	$x \in \mathcal{U} \setminus \Lambda$ and $y \in \Phi_t (x)$ we have $V(x) > V(y)$, for any $t > 0$.
	\end{itemize}
\end{itemize}

It follows from $(A3a)$ and Proposition 3.25 of Bena\"{i}m, Hofbauer and Sorin \cite{Benaim05} that $\Lambda$ contains a Lyapunov stable attracting set. Further there exists an attractor contained in $\Lambda$ whose basin
of attraction contains $\mathcal{U}$. 
Let us define $\mathcal{V}_s := \{x \mid V(x) < s\}$ and $\mathcal{V}_{\overline{r}} := \{x \mid V(x) \le r\}$, for every $s > 0$ and $r \ge 0$. Then, using the compactness of $\Lambda$ and
$(A3a)(ii)$ we can show that $\underset{r > 0}{\bigcap} \mathcal{V}_r = \Lambda$. Since $\mathcal{U}$ is open, $\exists \  0< R(a) < \sup \limits_{x \in \mathcal{U}} V(x)$, such that $\mathcal{V}_r \subset \mathcal{U}$, for $r \le R(a)$. Now, we define open sets $\mathcal{B}$ and $\mathcal{C}$ such that the following conditions are satisfied: 
\begin{itemize}
\item[(Ca)] $\Lambda \subseteq \overline{\mathcal{B}} \subseteq \mathcal{C}$
\item[(Cb)] $\overline{\mathcal{C}} \subseteq \mathcal{U}$
\end{itemize}
We use the notations $\mathcal{B}_a$ and $\mathcal{C}_a$ to indicate that the constructed sets are associated with (A3a).
Let $\mathcal{B}_a := \mathcal{V}_{r(a)}$ and $\mathcal{C}_a : = \mathcal{V}_{s(a)}$ such that $0<r(a) < s(a) < R(a)$, where $R(a)$ is such that $\mathcal{V}_q \subset \mathcal{U}$ for $q \le R(a)$. Since $V$ is continuous, we have that the closures of $\mathcal{V}_{r(a)}$ and $\mathcal{V}_{s(a)}$ satisfy $\overline{\mathcal{V}}_{r(a)} = \mathcal{V}_{\overline{r(a)}}$ and $\overline{\mathcal{V}}_{s(a)} = \mathcal{V}_{\overline{s(a)}}$, respectively. Hence, conditions (Ca) and (Cb) are readily satisfied.
\paragraph{}
The purpose of defining such sets will be clarified shortly. For now, we proceed with the first variant of $(A3a)$, we call it $(A3b)$:
\begin{itemize}
\item[\textbf{(A3b)}] 
Associated with $\dot{x}(t) \in H(x(t))$ is a compact set $\Lambda$, a bounded open neighborhood $\mathcal{U}$ and a
function $V: \overline{\mathcal{U}} \to \mathbb{R}^+$ such that
	\begin{itemize}
	\item[(i)] $\forall t \ge 0$, $\Phi_t (\mathcal{U}) \subseteq \mathcal{U}$ \textit{i.e.,} $\mathcal{U}$ is strongly positively invariant.
	\item[(ii)] $V^{-1} (0) = \Lambda$.
	\item[(iii)] $V$ is an upper semicontinuous function such that for all $x \in \overline{\mathcal{U}} \setminus \Lambda$ and $y \in \Phi_t (x)$ we have $V(x) > V(y)$, where $t > 0$.
	\item[(iv)] $\mathcal{V}_{\overline{r}} := \{x \mid V(x) \le r\}$ is closed for each $r \ge 0$. 
	\end{itemize}
\end{itemize}

	Since $(A3a)(iii) \implies (A3b)(iii)$, one may view $(A3b)$ as a weakening of $(A3a)$. Again, using \textit{Proposition 3.25} of Bena\"{i}m, Hofbauer and Sorin \cite{Benaim05} we get that  
	$\Lambda$ contains an attractor set such that $\mathcal{U}$ belongs to its basin of attraction. As in the case of
	$(A3a)$, we wish to define open sets $\mathcal{B}_b$ and $\mathcal{C}_b$ satisfying the previously mentioned conditions (Ca) and (Cb). We begin by claiming that $\mathcal{V}_r$ is open for $r > 0$. We prove this claim by showing that $\mathcal{V}_r^c = \{x \mid V(x) \ge r\}$ is closed. For this, we consider $\{x_n\}_{n \ge 0} \subseteq \mathcal{V}_r ^c$ such that $\lim \limits_{n \to \infty} x_n = x$. From the upper semicontinuity of $V$, we get $V(x) \ge \limsup \limits_{n \to \infty} V(x_n) \ge r$, hence, $x \in \mathcal{V}_r ^c$. Thus we get that $\mathcal{V}_r$ is open and $\mathcal{V}_{\overline{r}}$ is closed (from (A3b)(iv)) for $r > 0$. Finally, note that $\Lambda = \underset{r > 0}{\bigcap} \mathcal{V}_r$ as a consequence of (A3b)(ii). Hence, as in the case of (A3a), $\exists \ 0 < R(b) < \sup \limits_{x \in \mathcal{U}} V(x)$ such that $\mathcal{V}_r \subset \mathcal{U}$ for $r \le R(b)$.
	
	We are now ready to define $\mathcal{B}_b$ and $\mathcal{C}_b$ satisfying conditions (Ca) and (Cb). As before, we let $\mathcal{B}_b := \mathcal{V}_{r(b)}$ and $\mathcal{C}_b := \mathcal{V}_{s(b)}$, where $0 < r(b) < s(b) < R(b)$. From (A3b)(iv), we get that $\overline{\mathcal{V}}_{r(b)} \subset \mathcal{V}_{\overline{r(b)}}$ and $\overline{\mathcal{V}}_{s(b)} \subset \mathcal{V}_{\overline{s(b)}}$. Using this observation, we can easily conclude that conditions (Ca) and (Cb) are satisfied.  

	\vspace*{.3cm}
	
	\begin{remark}
	\label{lyapunov}
	If we can associate $\dot{x}(t) \in H(x(t))$ with an attractor 
	set $\mathcal{A}$ and a strongly positive invariant neighborhood 
	$\mathcal{U}$, of $\mathcal{A}$, then we can define an upper-semicontinuous Lyapunov function
	$V$, as found in \textit{Remark 3.26, Section 3.8} of \cite{Benaim05}. In particular, we can define a local Lyapunov function
	$V: \overline{\mathcal{U}} \to \mathbb{R}^+$ such that $V(x) :=$ $\max 
	\left\{ d(y, \mathcal{A}) g(t) \mid y \in \Phi_t(x), t \ge 0 \right\}$,
	where $g$ is an increasing function with $0 < \overline{c} < g(t) < \overline{d}$ for all $t \ge 0$.
	
	We claim that $V$, as defined above, satisfies $(A3b)$. To see this, we begin by noting that $(A3b)(i)$ is trivially satisfied. If we let $\Lambda := \mathcal{A}$, then it follows from the definition of $V$ that $(A3b)(ii)$ is satisfied.
	Since $\mathcal{U}$ is strongly positive invariant and
	$V(x) \le \underset{u \in \mathcal{U}}{\sup} \ d(u, \mathcal{A}) \times \overline{d}$ 
	for $x \in \mathcal{U}$,
	$\underset{u \in \mathcal{U}}{\sup} \ V(u) < \infty$. It now follows from the upper semicontinuity of $V$ that $\underset{u \in \overline{\mathcal{U}}}{\sup} \ V(u) < \infty$ \textit{i.e.,} (A3b)(iv) is satisfied.
	To show that $(A3b)(iii)$ is also satisfied, we first fix $x \in \mathcal{U}$ and $t > 0$. It follows from the definition of a semi-flow that
	$\Phi_s(y) \subseteq \Phi_{t+s}(x)$ for any $y \in \Phi_t (x)$, where $s > 0$. Further, 
	\[
	V(x) \ge \max \{d(z,  \mathcal{A}) g(t+s) \mid z \in \Phi_s(y), s \ge 0\} \text{ and}
	\]
	\[
	\max \{d(z,  \mathcal{A}) g(t+s) \mid z \in \Phi_s(y), s \ge 0\} > \max \{d(z,  \mathcal{A}) g(s) \mid z \in \Phi_s(y), s \ge 0\}.
	\]
	The RHS of the above equation is $V(y)$ \textit{i.e.,} $V(x) > V(y)$.
	
	It is left to show that $\mathcal{V}_{\overline{r}}$ is closed for $r \ge 0$. For this, we present a proof by contradiction, by assuming that $\{x_n\}_{n \ge 0} \subset \mathcal{V}_{\overline{r}}$, $\lim \limits_{n \to \infty} x_n = x$ and $x \notin \mathcal{V}_{\overline{r}}$. We have $V(x) = r + c$ for some $c > 0$. From the definition of $V$, we get that $\exists \ t(x) \ge 0$ and $y(x) \in \Phi_{t(x)}(x)$ such that $d(y(x), \mathcal{A}) g(t(x)) > r + c/2$. We may use \textit{Corollary 2 (Approximate Selection Theorem)} from \textit{Chapter2, Section 2} of \cite{Aubin} to construct $\{y_n\}_{n \ge 0}$ such that $y_n \in \Phi_{t(x)}(x_n)$ for each $n \ge 0$ and $\liminf \limits_{n \to \infty} y_n = y(x)$. Hence $\exists \ N$ such that $d(y_N, \mathcal{A}) g(t(x)) > r + c/4$. This yields a contradiction, as $x_N \in \mathcal{V}_{\overline{r}}$ implying that $d(y_N, \mathcal{A}) g(t(x)) \le r$.
	\end{remark} 
	
	\vspace*{.5cm}
	
Below is the final variant of $(A3a)$:
\begin{itemize}
\item[\textbf{(A3c)}]
\begin{itemize}
 \item[(i)] $\mathcal{A}$ is the global attractor of $\dot{x}(t) \in H(x(t))$.
 \item[(ii)] $V: \mathbb{R}^d \to \mathbb{R}^+$ is an
upper semicontinuous function such that $V(x) > V(y)$ for all $x \in \mathbb{R}^d \setminus \mathcal{A}$, 
$y \in \Phi_t (x)$ and $t > 0$.
\item[(iii)] $V(x) \ge V(y)$ for all $x \in \mathcal{A}$, $y \in \Phi_t (x)$ and $t > 0$.
\item[(iv)] $\mathcal{V}_{\overline{r}} := \{x \mid V(x) \le r\}$ is closed for each $r \ge 0$.
\end{itemize}
\end{itemize}

Since $\mathcal{A}$ is the global attractor, it is compact and for every $x \notin \mathcal{A}$, we have that $V(x) \ge \sup \limits_{y \in \mathcal{A}} V(y)$ and $\mathcal{A} = \underset{r > \sup \limits_{y \in \mathcal{A}} V(y)}{\bigcap} \mathcal{V}_r$. Further, $\exists \ R(c)$ with $\sup \limits_{y \in \mathcal{A}} V(y) <  R(c) < \infty$ such that $\mathcal{V}_r$ is a bounded open set for $r \le R(c)$.
Again, we define open sets $\mathcal{B}_c$ and $\mathcal{C}_c$ satisfying 
conditions (Ca) and (Cb) (see below the statement of $(A3a)$), as $\mathcal{B}_c := \mathcal{V}_{r(c)}$ and $\mathcal{C}_c := \mathcal{V}_{s(c)}$, $0 < r(c) < s(c) < R(c)$. The steps involved in showing that these are indeed the required sets follow similar arguments as for the sets associated with (A3b).

$\\$
\begin{remark}
 \label{lyapunov1}
 In Remark~\ref{lyapunov}, we explicitly constructed a local Lyapunov function satisfying
 $(A3b)$. We can similarly construct a global Lyapunov function, $\hat{V}: \mathbb{R}^d \to \mathbb{R}^+$,  satisfying $(A3c)$ as follows: $\hat{V}(x) := \ \max\left\{ d(y, \mathcal{A}) g(t) \mid y \in \Phi_t(x), t \ge 0 \right\}$, with $g(\cdotp)$ defined as in Remark~\ref{lyapunov}. 
 To see that $\hat{V}$ satisfies $(A3c)$, one may emulate the arguments following Remark~\ref{lyapunov}.
\end{remark}
$\\$
 
Let us say that we are given bounded open sets $\mathcal{B}$ and $\mathcal{C}$ such that $\overline{\mathcal{B}} \subset \mathcal{C}$. Also, $\dot{x}(t) \in H(x(t))$ can be associated with an attractor such that  $\mathcal{B}$ is its fundamental neighborhood. Then a classical way to ensure stability of \eqref{sa_sri} is by projecting the iterate at every stage $n$, i.e., project $x_n$ onto $\overline{\mathcal{B}}$ whenever $x_n \notin \mathcal{C}$. This associated projective scheme is given by:
\begin{equation}
\label{sa_proj_sri}
\hat{x}_{n+1} = \underset{\footnotesize{\mathcal{B}, \mathcal{C}}}{\pro} \left(  \hat{x}_n + a(n) [ y_n + M_{n+1}]\right), 
\end{equation}
where $\hat{x}_0 = x_0$ and $\underset{\footnotesize{\mathcal{B}, \mathcal{C}}}{\pro}: \mathbb{R}^d \to \{\text{subsets of }\mathbb{R}^d\}$ is the projection operator that projects onto the boundary of $\mathcal{B}$, when the operand escapes from set $\mathcal{C}$, i.e.,
\[
	\underset{\footnotesize{\mathcal{B}, \mathcal{C}}}{\pro}(x) :=  \begin{cases}
	\{x\} \text{, if $x \in \mathcal{C}$}  \\
	\{y \mid d(y, x) = d(x, \overline{\mathcal{B}}), \ y \in \overline{\mathcal{B}} \}  \text{, otherwise}.
	\end{cases}.
\]
The advantage of using \eqref{sa_proj_sri} as opposed to \eqref{sa_sri} is that stability is trivially ensured since $\mathcal{B}$ is bounded. The main drawback, however, is that $\mathcal{B}$ and $\mathcal{C}$ cannot be easily constructed. Further, if $\mathcal{B}$ and $\mathcal{C}$ are not carefully chosen, then the algorithm may converge to an undesirable limit. Recall that we constructed bounded sets $\mathcal{B}_{a/b/c}$ and $\mathcal{C}_{a/b/c}$, assuming $(A3a/b/c)$ and satisfying conditions (Ca) and (Cb). 
 The tuple $(\mathcal{B}_{a/b/c}$,\ $\mathcal{C}_{a/b/c})$ can be used to obtain a hypothetical realization of \eqref{sa_proj_sri}, which is the primary tool in the stability analysis of \eqref{sa_sri}. A
 key step in this analysis involves ``comparing'' the right-hand sides of the iterations \eqref{sa_sri} and \eqref{sa_proj_sri}. To facilitate such a comparison, we make the natural assumption that the realizations of the martingale noise are identical in both the iteration and its projective counterpart. Being that \eqref{sa_proj_sri} is hypothetical, this assumption is fair.
 
We are now ready to present our final assumption which specifies the relationship between a set-valued SA and its projective counterpart.
\begin{itemize}
\item[\textbf{(A4)}]  Almost surely, there exists $N < \infty$ such that $\underset{n \ge N}{\sup} \ \lVert x_{n+1} -  \tilde{x}_{n+1} \rVert < \infty$, where $\tilde{x}_{n+1} = \left(  \hat{x}_n + a(n) [ y_n + M_{n+1}]\right)$ is the value \eqref{sa_proj_sri} at time $n+1$ before projection. The sequences $\{x_n\}$ and $\{\hat{x}_n\}$ are generated by \eqref{sa_sri}, and  \eqref{sa_proj_sri} using sets $\mathcal{B}_{a/b/c}$ and $\mathcal{C}_{a/b/c}$, respectively.
\end{itemize}

In Section~\ref{sec_ana_avi}, we show that the AVI given by \eqref{avi_avi} satisfies (A3c) and (A4). The verifiability of (A4) for general problems involving set-valued mean-fields, is discussed in Section \ref{sec_note}. Before proceeding, we define the following:


\textbf{Inward directing set:} \textit{Given a differential inclusion $\dot{x}(t) \in H(x(t))$,
an open set $\mathcal{O}$ is said to be an inward directing set with respect to the aforementioned
differential inclusion, if $\Phi_t(x) \subseteq \mathcal{O}$, $t >0$, whenever $x \in \partial \mathcal{O}$, where $\partial \mathcal{O}$ denotes the boundary of set $\mathcal{O}$.} In words, any solution starting at the boundary of $\mathcal{O}$ is ``directed inwards'', into $\mathcal{O}$.

\begin{proposition}
 \label{sa_inward}
 $\mathcal{C}_i$ is an inward directing set with respect to $\dot{x}(t) \in H(x(t))$, and a fundamental neighborhood of attractor $\mathcal{A}$, where $i \in \{a,b,c\}$.
\end{proposition}
\begin{proof} {\footnotesize PROOF:}
Fix $i \in \{a,b,c\}$. Recall that $\mathcal{C}_i = \mathcal{V}_{r(i)}$ for an appropriately chosen $r(i) > 0$. Further, recall that $\overline{\mathcal{C}}_i \subseteq \mathcal{V}_{\overline{r(i)}}$, where $ \mathcal{V}_{\overline{r(i)}}$ is a compact (closed and bounded) subset of the basin of attraction for attractor $\mathcal{A}$. Hence $\mathcal{V}_{\overline{r(i)}}$, and consequently $\mathcal{C}_i$, is a fundamental neighborhood of $\mathcal{A}$. Recall that $\mathcal{V}_{r(i)} = \{x \mid V(x) < r(i)\}$, and that $\mathcal{V}_{\overline{r(i)}} = \{x \mid V(x) \le r(i)\}$. Since $\overline{\mathcal{C}}_i \subseteq \mathcal{V}_{\overline{r(i)}}$, we have that $V(x) = r(i)$ whenever $x \in \partial \mathcal{C}_i$. In other words, the Lyapunov function $V$ evaluated at any point on the boundary of $\mathcal{C}_i$ equals $r(i)$. If follows from the assumption (A3i) that $V(x) > V(y)$ for $y \in \Phi_t(x)$, $x \in \overline{\mathcal{U}} \setminus \Lambda$ and $t > 0$. Hence, for $x \in \partial \mathcal{C}_i$ and $t >0$, we have that $V(\Phi_t(x)) < r(i)$ and $\Phi_t (x) \subseteq \mathcal{V}_{r(i)}$, i.e., $\mathcal{C}_i$ is inward directing.
\hfill $\blacksquare$
\end{proof}$\\$

To show that \eqref{sa_sri} is stable, we compare it with the projective counterpart \eqref{sa_proj_sri}. To enable a hypothetical realization of \eqref{sa_proj_sri}, we need to ensure the existence of an
\textit{inward directing set with respect to the associated DI}.
This inward directing set could be $\mathcal{C}_a$, $\mathcal{C}_b$ or $\mathcal{C}_c$, depending on whether $(A3a)$, $(A3b)$ or $(A3c)$ is verified. 
Formally, we will prove the following stability theorem:

\begin{theorem}
\label{gn_main}
 Under $(A1),\ (AV4),\ (AV5),\ (A3)$ and $(A4)$, (\ref{sa_sri}) is stable.
\end{theorem}

Before we can arrive at the above statement, we need to prove a few auxiliary lemmata (Lemmas~\ref{san_xl} -- \ref{gn_1}). However, we defer these and the proof of Theorem~\ref{gn_main} to Section~\ref{sec_gen_ana}. For now, we assume the truth of the Theorem~\ref{gn_main}, and return to the analysis of AVI.

\section{Analyzing the Approximate Value Iteration Algorithm} \label{sec_ana_avi}
Let us analyze AVI, assuming $(AV1)-(AV5)$. Before we show that AVI converges to a fixed point of the perturbed Bellman operator, we need to show that it is stable. For this, we show that $(AV1)-(AV5)$ together imply that $(A1), \ (A3)$ and $(A4)$ are satisfied, then invoke Theorem~\ref{gn_main}. Before proceeding with the analysis, let us recall the AVI recursion:
\[
 J_{n+1} = J_n + a(n) \left[ TJ_n - J_n + \epsilon_n + M_{n+1} \right],
\]
where, from (AV3), $\limsup \limits_{n \to \infty} \lVert \epsilon_n \rVert_{\nu} \le \epsilon$ a.s. For more details on the notations, the reader is referred to Section~\ref{sec_avi}. It follows from $(AV3)$, that there exists $N < \infty$, possibly sample path dependent, such that $\sup \limits_{n \ge N} \lVert \epsilon_n \rVert_{\nu} \le \epsilon$. Since we are only interested in the asymptotic behavior of the recursion, without loss of generality, we may assume that $\sup \limits_{n \ge 0} \lVert \epsilon_n \rVert_{\nu} \le \epsilon$ a.s. 
We begin our analysis with a couple of technical lemmas. First, let us define $\nu_{max} := \max\ \{\nu_i \mid 1 \le i \le d\}$ and $\nu_{min} := \min\ \{\nu_i \mid 1 \le i \le d\}$, then for $z \in \mathbb{R}^d$ we have:
$$
\frac{\lVert z \rVert_\infty}{\nu_{max}} \le \lVert z \rVert_\nu \le \frac{\lVert z \rVert_\infty}{\nu_{min}}
$$
\begin{lemma}
\label{avi_cc}
 $B^\epsilon := \{ y \mid \lVert y \rVert_\nu \le \epsilon\}$ is a convex compact subset of 
 $\mathbb{R}^d$, where $\epsilon > 0$.
\end{lemma}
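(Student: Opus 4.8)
The plan is to recognize $B^\epsilon$ as nothing more than the closed ball of radius $\epsilon$ centered at the origin in the weighted max-norm, and then invoke the standard facts that norm balls are convex and that closed bounded subsets of $\mathbb{R}^d$ are compact. The only point worth being explicit about is that $\lVert \cdot \rVert_\nu$ is a genuine norm on $\mathbb{R}^d$; this uses crucially that $\nu_i > 0$ for every $i$, so that $\lVert y \rVert_\nu = 0$ forces $y = 0$, absolute homogeneity is clear, and the triangle inequality follows from $\lVert y + z \rVert_\nu = \max_i \frac{|y_i + z_i|}{\nu_i} \le \max_i \frac{|y_i| + |z_i|}{\nu_i} \le \max_i \frac{|y_i|}{\nu_i} + \max_i \frac{|z_i|}{\nu_i} = \lVert y \rVert_\nu + \lVert z \rVert_\nu$.

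For convexity, I would take $y, z \in B^\epsilon$ and $\lambda \in [0,1]$, and use homogeneity and the triangle inequality to get $\lVert \lambda y + (1-\lambda) z \rVert_\nu \le \lambda \lVert y \rVert_\nu + (1-\lambda)\lVert z \rVert_\nu \le \lambda \epsilon + (1-\lambda)\epsilon = \epsilon$, so $\lambda y + (1-\lambda) z \in B^\epsilon$. For compactness, by the Heine–Borel theorem it suffices to verify closedness and boundedness. Boundedness: if $\lVert y \rVert_\nu \le \epsilon$ then $|y_i| \le \epsilon \nu_i \le \epsilon \max_{1 \le j \le d} \nu_j$ for each coordinate, whence $B^\epsilon \subseteq \overline{B}_r(0)$ with $r := \sqrt{d}\,\epsilon \max_{j} \nu_j < \infty$. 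Closedness: the map $y \mapsto \lVert y \rVert_\nu$ is continuous, being the maximum of the finitely many continuous functions $y \mapsto |y_i|/\nu_i$, and $B^\epsilon$ is the preimage of the closed set $[0,\epsilon]$ under this map.

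I do not anticipate any real obstacle here; this lemma is a purely elementary preliminary whose purpose is presumably to feed $B^\epsilon$ into the Marchaud/convexity requirements of the later set-valued constructions, so the proof should be short and self-contained along exactly the lines above.
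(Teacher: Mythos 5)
Your proof is correct and follows essentially the same route as the paper: convexity via the triangle inequality and homogeneity of $\lVert \cdot \rVert_\nu$, and compactness via Heine--Borel, bounding coordinates through $\nu_{max}$ and obtaining closedness from the continuity of the norm (the paper phrases the latter with a convergent sequence and a $\liminf$, you with a preimage of $[0,\epsilon]$, which is the same fact). Your extra remark that $\lVert \cdot \rVert_\nu$ is a genuine norm because $\nu_i > 0$ is a harmless, slightly more careful addition.
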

\begin{proof} {\footnotesize PROOF:}
First we show that $B^\epsilon$ is convex.
 Given $y_1, y_2 \in B^\epsilon$ and $y = \lambda y_1 + (1-\lambda)y_2$, where $\lambda \in (0,1)$,
 we need show that $y \in B^\epsilon$. This is a direct consequence of: $\lVert y \rVert_\nu \le \lambda \lVert y_1 \rVert_\nu
 + (1-\lambda) \lVert y_2 \rVert_\nu \le \lambda \epsilon + (1-\lambda) \epsilon$.
 
 Now we show that $B^\epsilon$ is compact. Since $\frac{\lVert y \rVert_\infty}{\nu_{max}} \le \lVert y \rVert_\nu$, it follows that $B^\epsilon$
 is a bounded set. It is left to show that $B^\epsilon$ is closed. Let $y_n \to y$ and 
 $y_n \in B^\epsilon$ for every $n$. 
 Since $\liminf \limits_{n \to \infty} \lVert y_n \rVert_\nu \ge \lVert y \rVert_\nu$,
 it follows that $y \in B^\epsilon$. \hfill $\blacksquare$
\end{proof}$\\$
\begin{lemma}
 \label{avi_marchaud}
 The set-valued map $\tilde{T}$ given by $\tilde{T}x \mapsto Tx + B^\epsilon$ is a Marchaud map.
\end{lemma}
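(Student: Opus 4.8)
The plan is to verify the three defining properties of a Marchaud map for $\tilde{T}x = Tx + B^\epsilon$ directly from the definition given in Section~\ref{sec_def}, leaning on Lemma~\ref{avi_cc} for the structure of $B^\epsilon$ and on $(AV1)$ for the regularity of $T$. First I would check convex-compactness of each value: $\tilde{T}x$ is the Minkowski sum of the singleton $\{Tx\}$ and the set $B^\epsilon$, which is convex and compact by Lemma~\ref{avi_cc}; the Minkowski sum of a point with a convex compact set is again convex and compact, so property (i) holds.

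Next I would establish the pointwise linear growth bound, property (ii). Since $\|Tx - T0\|_\nu \le \alpha \|x\|_\nu$ by $(AV1)$, we get $\|Tx\|_\nu \le \alpha\|x\|_\nu + \|T0\|_\nu$, and hence for any $w \in \tilde{T}x$ we have $\|w\|_\nu \le \|Tx\|_\nu + \epsilon \le \alpha\|x\|_\nu + (\|T0\|_\nu + \epsilon)$. Converting between $\|\cdot\|_\nu$ and the Euclidean norm $\|\cdot\|$ using the equivalence constants (the weights $\nu_i$ are bounded above and below away from $0$ since $d$ is finite), this yields $\sup_{w \in \tilde{T}x}\|w\| \le K'(1 + \|x\|)$ for a suitable $K' > 0$, which is property (ii).

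Finally, property (iii), upper semicontinuity: given $x_n \to x$, $w_n \to w$ with $w_n \in \tilde{T}x_n$, write $w_n = Tx_n + b_n$ with $b_n \in B^\epsilon$. Since $T$ is a contraction in $\|\cdot\|_\nu$ it is continuous, so $Tx_n \to Tx$, whence $b_n = w_n - Tx_n \to w - Tx$; as $B^\epsilon$ is closed (Lemma~\ref{avi_cc}), the limit $w - Tx$ lies in $B^\epsilon$, so $w \in Tx + B^\epsilon = \tilde{T}x$. This gives upper semicontinuity, completing the verification.

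I do not anticipate a genuine obstacle here — the lemma is essentially bookkeeping. The only point requiring a little care is being explicit about the norm equivalence when passing from the weighted max-norm (in which $T$ and $B^\epsilon$ are naturally controlled) to the Euclidean norm used in the Marchaud growth condition $(A1)$; one should record the finite constants relating $\|\cdot\|_\nu$, $\|\cdot\|_\infty$ and $\|\cdot\|$ so that the constant $K'$ in property (ii) is unambiguous. Everything else follows from continuity of the contraction $T$ and the convex-compact-closed structure of $B^\epsilon$ already proved in Lemma~\ref{avi_cc}.
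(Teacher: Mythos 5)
Your proposal is correct and follows essentially the same route as the paper's proof: convexity and compactness from the Minkowski-sum structure and Lemma~\ref{avi_cc}, the linear growth bound via the contraction estimate $\lVert Tx \rVert_\nu \le \lVert T0 \rVert_\nu + \lVert x \rVert_\nu$ combined with the equivalence constants $\nu_{min}$ and $d\,\nu_{max}$, and upper semicontinuity by passing the limit through $y_n - Tx_n$ and using closedness of $B^\epsilon$. The only cosmetic difference is that the paper states the closedness step as $\epsilon \ge \liminf_n \lVert y_n - Tx_n \rVert_\nu \ge \lVert y - Tx \rVert_\nu$ rather than invoking closedness of $B^\epsilon$ by name.
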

\begin{proof} {\footnotesize PROOF:}
 Since $B^\epsilon$ is a compact convex set, it follows that $\tilde{T}x$ is compact and convex. Since
 $T$ is a contraction map, we have:
 \[
  \lVert Tx \rVert_\nu \le \lVert T0 \rVert_\nu + \lVert x - 0 \rVert_\nu.
 \]
Given $z \in \mathbb{R}^d$, we have:
\[
 \lVert z \rVert \le  \sqrt{d} \ \lVert z \rVert_\infty \le (\sqrt{d} \ \nu_{max}) \ \lVert z \rVert_\nu,
\]
where $\lVert z \rVert = \sqrt{z_1^2 + \ldots + z_d^2}$ is the standard Euclidean norm, and $d \ge 1$.
Hence we have
\[
 \lVert Tx \rVert \le (\sqrt{d} \ \nu_{max}) \lVert Tx \rVert_\nu \le 
 (\sqrt{d} \ \nu_{max})\ \left( \lVert T0 \rVert_\nu + \lVert x \rVert_\nu \right) \text{ and that}
\]
\begin{equation}
\label{marchaud_1}
 \lVert Tx \rVert \le K'(1 + \lVert x \rVert),
\end{equation}
where $K' := \left(\frac{\sqrt{d} \ \nu_{max}}{\nu_{min}}\right) \vee \left(d \ \nu_{max} \lVert T0 \rVert_\nu \right)$.
We have that $\sup \limits_{z \in \tilde{T}x}\lVert z \rVert \le \lVert Tx \rVert + \underset{z \in B^\epsilon}{\sup}\ \lVert z \rVert$.
It follows from Lemma~\ref{avi_cc} that $K := K' \vee \underset{z \in B^\epsilon}{\sup}\ \lVert z \rVert$
is finite, hence $\sup \limits_{z \in \tilde{T}x}\lVert z \rVert \le K(1 + \lVert x \rVert)$.

We now show that $\tilde{T}$ is upper semicontinuous. Let $x_n \to x$, $y_n \to y$ and $y_n \in \tilde{T}x_n$
for $n \ge 0$. Since $T$ is continuous, we have that $Tx_n \to Tx$. Hence, $\left( y_n - Tx_n \right) 
\to \left(y - Tx \right)$
and $\epsilon \ge \liminf \limits_{n \to \infty} \lVert  y_n - Tx_n \rVert_\nu \ge \lVert y-Tx \rVert_\nu$.
In other words, $y \in \tilde{T}x$. \hfill $\blacksquare$
\end{proof}
$\\$

Let us define the Hausdorff metric with respect to the weighted max-norm $H_\nu$.

\noindent
\textbf{Definition:} \textit{Let us suppose that we are given $A, B \subseteq \mathbb{R}^d$. The Hausdorff metric with respect to the weighted max-norm $\lVert \cdotp \rVert_\nu$, is given by
$H_\nu(A,B) := \sup \limits_{x \in A} d_\nu(x,B) \vee 
\sup \limits_{y \in B} d_\nu(y,A)$, where $d_\nu(x,B) := \inf \{ \lVert x - y \rVert_\nu \mid y \in B\}$
and $d_\nu(y,A) := \inf \{ \lVert x - y \rVert_\nu \mid x \in A\}$. The Hausdorff metric can be more generally defined with respect to any metric $\rho$ as $H_\rho (A,B) := \sup \limits_{x \in A} \overline{\rho}(x,B) \vee 
\sup \limits_{y \in B} \overline{\rho}(y,A)$, where $\overline{\rho}(y,A) := \inf \{ \rho(x, y) \mid x \in A\}$ and $\overline{\rho}(x,B) := \inf \{ \rho(x, y) \mid y \in B\}$.}

Now, we state a couple of simple claims without proofs: \\
\textbf{Claim 1:} \textit{Given $x, y \in \mathbb{R}^d$,
there exist $x^* \in \tilde{T}x$ and $y^* \in \tilde{T}y$ such that 
$\lVert x^* - y^* \rVert_\nu = H_\nu(\tilde{T}x , \tilde{T}y)$.} 

The above claim directly follows from the observation that for any $x_0 \in \tilde{T}x$
and $y_0 \in \tilde{T}y$ we have
\begin{equation}
\label{avi_2epsilon}
 \lVert x_0 - y_0 \rVert_\nu \le \lVert x_0 - x^* \rVert_\nu + \lVert x^* - y^* \rVert_\nu +
 \lVert y^* - y_0 \rVert_\nu \le 2\epsilon + H_\nu(\tilde{T}x, \tilde{T}y).
\end{equation}
\textbf{Claim 2:} \textit{For any $z \in \mathbb{R}^d$ we can show that}
\begin{equation}
\label{avi_norms}
 \nu_{min} \lVert z \rVert_\nu \le \lVert z \rVert_\infty \le \lVert z \rVert \le (\sqrt{d} \ \nu_{max}) \lVert z \rVert_\nu.
\end{equation}

Given a set-valued map $H: \mathbb{R}^d \to \{\text{subsets of } \mathbb{R}^d \}$, $x \in \mathbb{R}^d$ is an equilibrium point of $H$ iff the origin belongs to $H(x)$. Since $J^*$ is the unique fixed point of the Bellman operator $T$, $J$ is a fixed point of the perturbed Bellman operator $\tilde{T}$ (defined in the statement of Lemma~\ref{avi_marchaud}), or an equilibrium point of the $\tilde{T}J - J$ operator, iff $\lVert TJ - J \rVert_\nu \le \epsilon$. In other words, \textbf{the equilibrium set of $\tilde{T}J - J$ is given by $\mathcal{A} = \{J \mid \lVert TJ - J \rVert_\nu \le \epsilon\}$}. Since $J^* \in \mathcal{A}$ and $T$ is continuous, $\mathcal{A}$ constitutes a closed neighborhood of $J^*$. Controlling the norm-bounds, $\epsilon$, on the approximation errors is one way to control the size of $\mathcal{A}$. 

Assumption $(AV2)$ dictates that $J^*$ is the global attractor (global asymptotic stable equilibrium point) of $\dot{J}(t) = TJ(t) - J(t)$. The following is a consequence of the upper-semicontinuity of $J^*$: given a neighborhood $\mathcal{N}$ of $J^*$, $\exists$ \mbox{\Large $\epsilon$}$(\mathcal{N}) > 0$ such that $\dot{J}(t) \in \tilde{T} J(t) - J(t)$ has a global attractor $\mathcal{A}' \subseteq \mathcal{N}$, provided $\epsilon \le$ \mbox{\Large $\epsilon$}$(\mathcal{N})$ a.s. \textbf{The norm-bound on the approximation errors, \mbox{\Large $\epsilon$}$(\mathcal{N})$, is therefore a function of the neighborhood $\mathcal{N}$.} We show that the asymptotic behavior of AVI, \eqref{avi_avi}, is identical to that of a solution to $\dot{J}(t) \in \tilde{T} J(t) - J(t)$, i.e., it converges to $\mathcal{A}'$. However, it is desirable for AVI to converge to an equilibrium point in $\mathcal{A}$. \textbf{In what follows, we will in fact show that (\ref{avi_avi}) converges to $\mathcal{A} \cap \mathcal{A}'$ and that 
$\mathcal{A} \cap \mathcal{A}' \neq \phi$.}

Typically, a certain degree of accuracy is expected from the AVI. This accuracy is specified through the specification of a neighborhood $\mathcal{N}$, of $J^*$. Once $\mathcal{N}$ is fixed, the above discussion provides \mbox{\Large $\epsilon$}$(\mathcal{N})$, the asymptotic norm-bound on the approximation errors, and $\mathcal{A}'$, the global attractor associated with $\dot{J}(t) \in \tilde{T} J(t) - J(t)$. Note that the asymptotic error bound is ensured by effectively training a parameterized function (for e.g., neural networks) to approximate the Bellman operator. As stated earlier, the stability analysis of AVI involves verifying that $(A1)-(A4)$ are satisfied. Among the three variants of $(A3)$, we choose to verify $(A3c)$. Recall that verifying $(A3c)$, of Section~\ref{sec_sa}, involves the construction of a global Lyapunov function, and this function can be constructed using the recipe presented in Remark~\ref{lyapunov1}. The main ingredients of this recipe are a global attractor set and an associated differential inclusion.
\textbf{For AVI, we construct the global Lyapunov function using $\mathcal{A}'$ and
$\dot{J}(t) \in \tilde{T}J(t) - J(t)$}. Note that once $(A3c)$ is verified, we construct bounded open sets $\mathcal{B}$ and $\mathcal{C}$ such that
$\mathcal{A}' \subseteq \mathcal{B}$ and $\overline{\mathcal{B}} \subseteq \mathcal{C}$. For details on this construction, the reader is referred to the discussion around \eqref{sa_proj_sri} in Section~\ref{sec_sa}. Using these sets, we can associate a projective counterpart to AVI:
\begin{align}
  \label{avi_pavi}
  \nonumber
   J_{n+1} &= \hat{J}_n + a(n)\left( T\hat{J}_n - \hat{J}_n + \hat{\epsilon}_n + M_{n+1} \right),\\
   \hat{J}_{n+1} &= \underset{\mathcal{B}, \mathcal{C}}{\pro} (J_{n+1}).
 \end{align}
 Let us call this \textbf{projective approximate value iteration}.
\textit{It is worth noting that the noise sequences in (\ref{avi_avi}) and (\ref{avi_pavi})
are identical and that $\hat{\epsilon}_n \le \epsilon$} for all $n$. 
\textbf{Following the analysis in Section~\ref{sec_analysis}, we can conclude that 
$\hat{J}_n \to \mathcal{A}'$}. We are now ready to present the main theorem of this paper.
\begin{theorem}
 \label{avi_main}
  Under $(AV1)$-$(AV5)$, the AVI recursion given by (\ref{avi_avi}) is stable and converges to an equilibrium point of $\tilde{T}J - J$, i.e., to
  some point in $\left \{J \mid \lVert TJ - J \rVert_\nu \le \epsilon \right\}$, where $\limsup \limits_{n \to \infty} \ \lVert \epsilon_n \rVert_\nu \le \epsilon$ a.s.
\end{theorem}
\begin{proof} {\footnotesize PROOF:}
To prove that AVI is stable, we begin by showing that it satisfies $(A1), \ (A3)$ and $(A4)$. Then, we invoke Theorem~\ref{gn_main} to infer the stability of AVI. Once we have stability, we proceed with the convergence analysis. \textbf{From the discussion presented before the statement of this theorem, it is clear that $(A3c)$ is satisfied}. It follows from $(AV1)$, $(AV3)$ and \eqref{avi_norms} that $(A1)$ is satisfied. Hence, we shall obtain the stability of AVI if we are able to show that $(A4)$ is satisfied. Recall that we use $J_n$ to denote the original AVI iterate, and $\hat{J}_n$ to denote its associated projective counterpart. As mentioned before, it follows from the analysis in Section~\ref{sec_analysis} that $\hat{J}_n \to \mathcal{A}'$, where $\mathcal{A}'$ is as defined in this section. Further, this analysis does not require that $(A4)$ is satisfied. 

We are now ready to show that $(A4)$ is also satisfied by AVI. Since $\hat{J}_n \to \mathcal{A}'$, there exists $N$, possibly sample path
  dependent, such that $\hat{J}_n \in \overline{\mathcal{B}}$ for all $n \ge N$. For $k \ge 0$
  and $n \ge N$,
  \[
   \lVert J_{n+k+1} - \hat{J}_{n+k+1} \rVert_\nu \le \left \lVert 
   J_{n+k} - \hat{J}_{n+k} + a(n+k) \left( (TJ_{n+k} + \epsilon_{n+k}) - 
   (T\hat{J}_{n+k} + \hat{\epsilon}_{n+k}) - (J_{n+k} - \hat{J}_{n+k}) \right)
   \right \rVert_\nu.
  \]
Grouping terms of interest in the above inequality we get:
\[
 \lVert J_{n+k+1} - \hat{J}_{n+k+1} \rVert_\nu \le (1 - a(n+k)) \lVert J_{n+k} - \hat{J}_{n+k}  \rVert_\nu
 + a(n+k) \lVert (TJ_{n+k} + \epsilon_{n+k}) - 
   (T\hat{J}_{n+k} + \hat{\epsilon}_{n+k}) \rVert_\nu.
\]
As a consequence of (\ref{avi_2epsilon}) the above equation becomes
\begin{equation}
\label{avi_thm1}
 \lVert J_{n+k+1} - \hat{J}_{n+k+1} \rVert_\nu \le (1 - a(n+k)) \lVert J_{n+k} - \hat{J}_{n+k}  \rVert_\nu
 + a(n+k) \left( 2\epsilon + H_\nu (\tilde{T}J_{n+k}, \tilde{T} \hat{J}_{n+k}) \right).
\end{equation}
We now consider the following two cases:
\\
\textit{\textbf{Case 1. $2 \epsilon \le (1-\alpha) \lVert J_{n+k} - \hat{J}_{n+k} \rVert_\nu $:}}\\
In this case (\ref{avi_thm1}) becomes
\[
 \lVert J_{n+k+1} - \hat{J}_{n+k+1} \rVert_\nu \le (1 - a(n+k)) \lVert J_{n+k} - \hat{J}_{n+k}  \rVert_\nu
 + a(n+k) \left( (1-\alpha) \lVert J_{n+k} - \hat{J}_{n+k} \rVert_\nu +
 \alpha \lVert J_{n+k} - \hat{J}_{n+k} \rVert_\nu \right).
\]
Simplifying the above equation, we get
\[
 \lVert J_{n+k+1} - \hat{J}_{n+k+1} \rVert_\nu \le \lVert J_{n+k} - \hat{J}_{n+k} \rVert_\nu.
\]
\textit{\textbf{Case 2. $2 \epsilon > (1-\alpha) \lVert J_{n+k} - \hat{J}_{n+k} \rVert_\nu $:}}\\
In this case (\ref{avi_thm1}) becomes
\[
 \lVert J_{n+k+1} - \hat{J}_{n+k+1} \rVert_\nu \le (1 - a(n+k))\frac{2 \epsilon}{1- \alpha} + a(n+k)
 \left(2 \epsilon + \alpha \frac{2 \epsilon}{1- \alpha}  \right).
\]
Simplifying the above equation, we get
\[
 \lVert J_{n+k+1} - \hat{J}_{n+k+1} \rVert_\nu \le \frac{2 \epsilon}{1- \alpha}.
\]
We may thus conclude the following:
\begin{equation} \nonumber
 \lVert J_{n+k+1} - \hat{J}_{n+k+1} \rVert_\nu \le \lVert J_{n+k} - \hat{J}_{n+k} \rVert_\nu
 \vee \left(\frac{2 \epsilon}{1- \alpha} \right).
\end{equation}
Applying the above set of arguments to $\lVert J_{n+k} - \hat{J}_{n+k} \rVert_\nu$ and proceeding
recursively to $\lVert J_{N} - \hat{J}_{N} \rVert_\nu$ we may conclude that for any $n \ge N$,
\begin{equation}
 \label{avi_thm2}
 \lVert J_{n} - \hat{J}_{n} \rVert_\nu \le \lVert J_{N} - \hat{J}_{N} \rVert_\nu
 \vee \left(\frac{2 \epsilon}{1- \alpha} \right).
\end{equation}
If we couple the above equation with \eqref{avi_norms}, we get that $(A4)$ is satisfied. It now follows from Theorem~\ref{gn_main} that $\sup \limits_{n \ge 0}\ \lVert J_{n} \rVert < \infty$ a.s. (stability of AVI).

To analyze the convergence properties of AVI, we use \textit{Theorem 3.6 and Lemma 3.8}
of Bena\"{i}m \cite{Benaim05}. Specifically, it follows from \textit{Theorem 3.6} \cite{Benaim05} that AVI converges to a closed connected internally chain transitive invariant set, $\mathcal{S}$,
of $\dot{J}(t) \in \tilde{T}J(t) - J(t)$. Further, since $\mathcal{A}'$ is a global attractor of $\dot{J}(t) \in \tilde{T}J(t) - J(t)$, it follows
that $\mathcal{S} \subseteq \mathcal{A}'$. Hence $J_n \to \mathcal{A}'$. \textbf{But we need to show that $J_n \to \mathcal{A}$, the set of equilibrium points of the set-valued operator $\tilde{T}J(t) - J(t)$}. We achieve this by showing that the $J_n$ sequence, in fact, converges to $\mathcal{A} \cap \mathcal{A}'$ and that $\mathcal{A} \cap \mathcal{A}' \neq \phi$. For this we need \textit{Theorem 2} from \textit{Chapter 6} of Aubin and Cellina \cite{Aubin} which we reproduce below.\\
\textbf{[Theorem 2, Chapter 6 \cite{Aubin}]} \textit{Let $F$ be an upper semicontinuous map from a closed subset
$\mathcal{K} \subset X$ to $X$ with compact convex values and $x(\cdotp)$ be a solution trajectory
of $\dot{x}(t) \in F(x(t))$ that converges to some $x^*$ in $\mathcal{K}$. Then $x^*$ is an equilibrium
of $F$.}

Since $\sup \limits_{n \ge 0} \ \lVert J_n \rVert < \infty$ a.s.,
there exists a large compact convex set $\mathcal{K} \subseteq \mathbb{R}^d$, possibly sample path
dependent, such that $J_n \in \mathcal{K}$ for all $n \ge 0$. Further, $\mathcal{K}$ can be chosen such that
the ``tracking solution'' of $\dot{x}(t) \in \tilde{T}J(t) - J(t)$ is also inside $\mathcal{K}$.
It now follows that the conditions of the above stated theorem are satisfied. Hence every limit point of
(\ref{avi_avi}) is an equilibrium point of $\tilde{T}J - J$. In other words, 
$J_n \to \mathcal{A} \cap \mathcal{A}'$, as required. \hfill $\blacksquare$
\end{proof}
$\\$
\begin{remark}
 \label{avi_remark}
 We can show that $\left \{J \mid \lVert TJ - J \rVert_\nu \le \epsilon \right\}
\downarrow \{J^*\}$ as $\epsilon \downarrow 0$.
In other words, as the norm-bound on the asymptotic errors of the AVI decreases, the limiting set asymptotically diminishes and converges to the point $J^*$.
 \end{remark}
 
 \changes{Given a cost-to-go vector $J$, let $J(s)$, its $s^{th}$ component, denote the \emph{cost-to-go value} associated with state $s$, where $1\le s \le d$ and $d$ is the number of states in the system. Suppose $J_\infty$ is the limit of AVI, then it follows from Theorem~\ref{avi_main}, that $\frac{ \lvert TJ_\infty (s) - J_\infty (s) \rvert}{\nu_s} \le \epsilon$. For the discounted cost problem, we know that the Bellman operator is contractive with respect to the $\ell_\infty$ norm, i.e., $\nu_s = 1$ for $1 \le s \le d$. Then, $ \lvert TJ_\infty (s) - J_\infty (s) \rvert \le \epsilon$ for all states $s$. 
 Given state $s$ and an asymptotic bound of $\epsilon$ on the approximation errors, in the weighted max-norm sense, Theorem~\ref{avi_main} states that the limit of AVI $J_\infty$, satisfies $ \lvert TJ_\infty (s) - J_\infty (s) \rvert \le \nu_s \epsilon$. Hence, lower is the weight $\nu_s$ associated with state $s$, the closer $J_\infty (s)$ is to the optimal cost-to-go value $J^*(s)$.}
 
\subsection{A note on the norm used to bound the approximation errors} \label{sec_avi1}
Recall that the approximation errors are asymptotically bounded in the weighted max-norm sense.
These errors are consequences of Bellman operator approximations, used to counter \textit{Bellman's curse of dimensionality}. In a model-free setting, typically
one is given data of the form $(x_n, v_n)$, where $v_n$ is an unbiased estimate of 
the objective function at $x_n$. The online training
of an approximation operator can be emulated by a supervised learning algorithm. This algorithm would
return a good fit $g$, of the Bellman operator, from within a class of possible functions $\mathcal{T}$. The objective for these algorithms
would be to minimize the empirical approximation errors. Previously, we considered approximation
operators that minimize errors in the weighted max-norm sense. This means, ensuring that the approximation errors are uniformly bounded across all states, something that may be hard in large-scale applications.
\textit{Here, the errors may be minimized in the weighted p-norm sense.}
\paragraph{}
In many applications the approximation operators work by minimizing the errors in the $\ell_1$ and
$\ell_2$ norms, see Munos \cite{munos} for details. Let us consider the general case
of approximation errors being bounded in the weighted p-norm sense. Specifically, we consider
(\ref{avi_avi}) with $\lVert \epsilon_n \rVert_{\omega, p} \le \epsilon$ for some fixed $\epsilon > 0$.
Recall the definition of the weighted p-norm of a
given $z \in \mathbb{R}^d$:
\[
 \lVert z \rVert_{\omega,p} = \left( \sum \limits_{i=1}^d \omega_i |z_i|^p \right)^{1/p},
\]
where $\omega = (\omega_1, \ldots, \omega_d)$ is such that $\omega_i > 0$, $1 \le i \le d$
and $p \ge 1$. 
\paragraph{}
We observe the following relation between weighted p-norm and the weighted max-norm. For $z \in \mathbb{R}^d$,
we have $\frac{\lVert z \rVert_{\infty}}{\nu_{max}} \le \lVert z \rVert_{\nu} \le \frac{\lVert z \rVert_{\infty}}{\nu_{min}}$, where $\lVert \cdotp \rVert_{\infty}$ denotes the (unweighted) max-norm, $\nu_{max} = \max \ \{\nu_1, \ldots, \nu_{d} \}$, $\nu_{min} = \min \ \{\nu_1, \ldots, \nu_{d} \}$, and $\lVert z \rVert_{\nu} = \max \left\{\frac{\lvert z_i \rvert}{\nu_i} \  \mid \  1 \le i \le d \right\}$.
Then, 
\begin{equation}
 \label{avi_pnorms}
 \nu_{min} \ \omega_{min}^{1/p} \ \lVert z \rVert_{\nu} \le \omega_{min}^{1/p} \lVert z \rVert_{\infty} \le  \lVert z \rVert_{\omega,p},
\end{equation}
where $\omega_{min} := \min \{\omega_1, \ldots, \omega_d\}$.
Let us consider the following stochastic iterative AVI scheme:
\begin{equation}
 \label{avi_avi1}
 J_{n+1} = J_n + a(n) \left(TJ_n - J_n + \tilde{\epsilon}_n + M_{n+1} \right),
\end{equation}
where $\tilde{\epsilon}_n = \overline{T}J_n - T J_n$, $\overline{T}$ is the approximate Bellman operator when the approximation errors are bounded in the weighted p-norm sense, and $\lVert \tilde{\epsilon}_n \rVert_{\omega, p} \le \epsilon$ for all $n \ge 0$. Using the previously discussed relation between the weighted p-norm and the weighted max-norm we may arrive at a similar result for \eqref{avi_avi1}. Before stating the theorem, we recall that the Bellman operator $T$ is contractive with respect to the weighted max-norm $\lVert \cdotp \rVert_{\nu}$.
\changes{
\begin{theorem}
 \label{avi_main1}
 Under (AV1,2,4,5), the AVI given by (\ref{avi_avi1}) is stable and converges to
  some point in $\left \{J \mid \lVert TJ - J \rVert_{\omega, p} \le  \epsilon \right\}$, 
 provided $\underset{n \to \infty}{\limsup} \ \lVert \tilde{\epsilon} \rVert_{\omega, p} \le \epsilon$  and $\omega_{min} > 0$.
\end{theorem}
\begin{proof}{\footnotesize PROOF:}
First, we show that \eqref{avi_avi1} is stable, i.e., $\sup \limits_{n \ge 0} \ \lVert J_n \rVert_{\omega, p} < \infty$ a.s. To do this we emulate the proof of Theorem~\ref{avi_main}, in particular, the part of the proof  that shows the stability of \eqref{avi_avi}. The main difference between \eqref{avi_avi} and \eqref{avi_avi1} is that the approximation errors in the former are bounded in the weighted max-norm sense, while in the latter they are bounded in the weighted p-norm sense. To emulate the proof, it is sufficient to show that:
\begin{equation}\label{main_limsup}
\limsup \limits_{n \to \infty } \ \lVert \tilde{\epsilon}_n \rVert_{\nu} \le \frac{\epsilon}{\nu_{min}\  \omega_{min}^{1/p}}.
\end{equation}
This is because, (AV3) is now valid with $\frac{  \epsilon}{\nu_{min}\  \omega_{min}^{1/p}}$ replacing $\epsilon$ in its statement. 
Given that $\limsup \limits_{n \to \infty } \ \lVert \tilde{\epsilon}_n \rVert_{\omega, p} \le \epsilon$, from previous discussion on norms, we get:
$$
\lVert \tilde{\epsilon}_n \rVert_{\nu} \le \frac{ \lVert \tilde{\epsilon}_n \rVert_{\omega, p}}{\nu_{min}\  \omega_{min}^{1/p}}, \forall \ n \ge 0. $$

Hence, $\sup  \limits_{n \ge 0} \ \lVert J_n \rVert_{\nu} < \infty$ a.s. Again, using the previously discussed norm relations and $\omega_{min} > 0$, we get the required $\sup  \limits_{n \ge 0}\  \lVert J_n \rVert_{\omega, p} < \infty$ a.s.

It follows from \textit{Theorem 3.6} of \cite{Benaim05} that \eqref{avi_avi1} \textit{tracks} a solution to $\dot{J}(t) \in \overline{T}J(t) - J(t)$, where $\overline{T}J = TJ + \tilde{B}_\epsilon$ and $\tilde{B}_\epsilon := \{\delta \mid \lVert \delta \rVert_{\omega, p} \le \epsilon \}$. By ``tracking'', we mean that the asymptotic properties of \eqref{avi_avi1} and $\dot{J}(t) \in \overline{T}J(t) - J(t)$ are identical. Now, as in the proof of Theorem~\ref{avi_main}, we invoke \textit{Theorem 2} from \textit{Chapter 6} of \cite{Aubin} to deduce that $J_n \to$ an equilibrium point of $\overline{T}J - J$, as $n \to \infty$. Hence, $0 \in \overline{T}J_\infty - J_\infty$, $0 \in TJ_\infty - J_\infty + \tilde{B}_\epsilon$, and $\lVert TJ_\infty - J_\infty \rVert_{\omega, p} \le \epsilon$, as required.

\hfill $\blacksquare$
\end{proof}$\\$

In \cite{munos}, a supervised learning algorithm is described, to approximate the Bellman operator. An important step in the algorithm is the sampling of states. When a state is sampled well, the approximate Bellman operator evaluated at that state has a high accuracy. Further, the weight $\omega_s$, from the above discussed weighted p-norm, is directly proportional to sampling rate of state $s$, where $1 \le s \le d$. A larger $\omega_s$ therefore corresponds to a better approximation at $s$. In the proof of Theorem~\ref{avi_main1}, we required $\omega_{min} > 0$ to establish the stability of \eqref{avi_avi1}. Let us suppose we use the approximation algorithm from \cite{munos} within our AVI routine. Then, it follows from the previous discussion that in order to guarantee the stability of AVI, one must ensure that the approximation algorithm samples every state (possibly unequally). On the other hand, when state $s$ is never sampled, $\omega_s$ and (hence) $\omega_{min}$ equal $0$. We expect the approximation errors to be high for $s$. In this case, stability of AVI cannot be guaranteed as it may be that $J_n(s) \to \infty$ with positive probability, primarily due to the unchecked accumulation of approximation errors over time.

Given a state $s$ with $\omega_s > 0$, it follows from Theorem~\ref{avi_main1} that $\left \lvert T J_\infty (s) - J_\infty (s) \right \rvert \le \nicefrac{\epsilon}{\omega_s ^{1/p}}$. Hence, larger the weight value $\omega_s$, the closer $ J_\infty (s)$ is to the optimal cost-to-go value $J^*(s)$. Suppose $s$ is associated with a very small weight value, i.e., $\omega_s \approx 0$, $\left \lvert T J_\infty (s) - J_\infty (s) \right \rvert$ might be very large. This is hardly surprising, as the approximate Bellman operator evaluated at $s$ is expected to have very high approximation errors. To summarize, if one were to use the supervised learning routine from \cite{munos}, then AVI is stable when \emph{all the states are sampled a large number of times}. With respect to the optimality of the limit, it varies over the state space. In particular, the limit of AVI, $J_\infty$, evaluated at states that are very well sampled will be close to their optimal cost-to-go values.

}
\subsection{Relevance to literature}\label{avi_munos}
In solving large-scale sequential decision making problems AVI methods are a popular choice, due to their versatility and simplicity. Further, they facilitate in finding close-to-optimal solutions despite approximation and sampling errors. Ensuring the stablitiy of such methods is a challenge in many reinforcement learning applications. 
A major contribution of this paper,
not addressed in previous literature, is the development of easily verifiable sufficient 
conditions for the almost sure boundedness of AVI methods involving set-valued dynamics.

An important contribution to the understanding of AVI methods has been due to Munos \cite{munos}. Here, the infinite horizon discounted cost problem is considered and solved using AVI. The analysis is performed when the approximation errors
are bounded in the weighted p-norm sense, a significant improvement over \cite{BertsekasBook} that only considered
max norms. In particular, a strong rate of convergence result is presented. However, the basic procedure considered is a numerical AVI scheme where complete
knowledge of the `system model', i.e., the transition probabilities, is assumed. In addition, smoothness restrictions are imposed on the transition kernel. Such requirements are hard to verify within RL settings. For the asymptotic analysis presented herein, we do not require such assumptions.

We believe that the structure of iteration \eqref{avi_avi} is generic and is observed in many deep learning algorithms. For example, it occurs in the Q-learning procedure where the Bellman operator $T$ is in fact
the Q-Bellman operator. Here,
no information is known about the system transition probabilities, i.e., we are in the `model-free' setting. Our analysis is readily applicable to this scenario. \textit{Finally, note that our analysis works for both
stochastic shortest path and infinite horizon discounted cost problems.}

To summarize, we analyze AVI-like schemes for which (a) information on the transition probabilities
is unknown, (b) there
is a measurement error (albeit asymptotically bounded) that may arise for instance from the use of function approximation, and (c) the basic framework involves either the stochastic shortest path or the discounted cost setting.

\section{Stability analysis of general set-valued stochastic approximations} \label{sec_gen_ana}
Now that we have completed the analysis of AVI, we present a stability analysis of general set-valued stochastic approximations, under the Lyapunov stability assumptions presented in Section~\ref{sec_sa}. Before we begin, we recall  \eqref{sa_sri}, the set-valued stochastic iterate: 
\begin{equation} \nonumber
x_{n+1} = x_n + a(n) \left[ y_n + M_{n+1} \right], \tag{\eqref{sa_sri} recalled}
\end{equation}
where $y_n \in H(x_n)$ for all $n \ge 0$ with $H: \mathbb{R}^d \to \{\text{subsets of }\mathbb{R}^d\}$,
$\{a(n)\}_{n \ge 0}$ is the given step-size sequence and $\{ M_{n+1}\}_{n \ge 0}$ is the given noise sequence. Let us also recall the projective counterpart of \eqref{sa_sri}, given by \eqref{sa_proj_sri} for easy reference:
\begin{equation} \nonumber
\hat{x}_{n+1} = \underset{\footnotesize{\mathcal{B}, \mathcal{C}}}{\pro} \left(  \hat{x}_n + a(n) [ y_n + M_{n+1}]\right), \tag{\eqref{sa_proj_sri} recalled}
\end{equation}
where $\hat{x}_0 = x_0$ and $\underset{\footnotesize{\mathcal{B}, \mathcal{C}}}{\pro}: \mathbb{R}^d \to \{\text{subsets of }\mathbb{R}^d\}$ is the projection operator that projects onto set $\mathcal{B}$, when the operand escapes from set $\mathcal{C}$, i.e.,
\[
	\pro _{\mathcal{B,C}}(x) :=  \begin{cases}
	\{x\} \text{, if $x \in \mathcal{C}$}  \\
	\{y \mid d(y, x) = d(x, \overline{\mathcal{B}}), \ y \in \overline{\mathcal{B}} \}  \text{, otherwise}.
	\end{cases}.
\]
\subsection{Analysis of the associated projective scheme} \label{sec_analysis}
We study the stability properties of \eqref{sa_sri} by analyzing its \emph{hypothetical} projective counterpart \eqref{sa_proj_sri}. 
Recall that the main purpose of $(A3a/b/c)$ is to ensure the existence of an inward directing set. From Proposition \ref{sa_inward}, we get that $\mathcal{C}_{a/b/c}$ is the required inward directing set. Further, from previous discussions we have that $\overline{\mathcal{B}}_{a/b/c} \subset \mathcal{C}_{a/b/c}$. Since the roles of the variants $(A3a/b/c)$, $\mathcal{B}_{a/b/c}$ and $\mathcal{C}_{a/b/c}$ are indistinguishable, with a slight abuse of notation, we generically refer to them using $(A3)$, $\mathcal{B}$ and $\mathcal{C}$, respectively.

\textbf{Before proceeding, we make a quick note on notation. Previously, the projective iterates were represented by $\hat{x}_n$ and the normal iterates by $x_n$. Since we only consider the projective iterates in this section, for the sake of aesthetics, we abuse the notation slightly and simply use $x_n$ and omit the ``hat'' notation.} Hence, the projective scheme written using the new notation is given by:
\begin{equation} \label{san_proj1}
\begin{split}
\tilde{x}_{n+1} &= x_n + a(n) \left[y_n + M_{n+1} \right], \\
x_{n+1} &= z_n, \text{ with } z_n \in \underset{\mathcal{B,C}}{\pro}(\tilde{x}_{n+1}),
\end{split}
\end{equation}
where $y_n \in H(x_n)$ and $x_0 \in \underset{\mathcal{B,C}}{\pro}(\tilde{x}_{0})$, with $\tilde{x}_0 \in \mathbb{R}^d$.
Note that the initial point $\tilde{x}_0$
is first projected before starting the projective scheme.
The above equation can be rewritten as
\begin{equation} \label{san_proj}
x_{n+1} = x_n + a(n) \left[ y_n + M_{n+1} \right] + g_n,
\end{equation}
where $g_n = z_n - \left(x_n + a(n) \left[ y_n + M_{n+1} \right] \right)$.

Integral to our analysis is the construction of a linearly interpolated trajectory that has identical asymptotic behavior as the projective counterpart \eqref{san_proj}.
We begin by dividing 
$[0, \infty)$ into diminishing intervals using  the step-size sequence.
Let $t_0:=0$ and $t_n := \sum \limits_{m=0}^{n-1} a(m)$ for $n \ge 1$. 
The linearly interpolated trajectory $X_l(t)$ is defined as follows:
\begin{equation} \label{san_li}
X_l(t) := \begin{cases}
			x_n \text{, for } t = t_n  \\
			\left(1- \frac{t-t_n}{a(n)} \right) x_n + \left(\frac{t - t_n}{a(n)} \right) \tilde{x}_{n+1}  \text{, for } t \in [t_n , t_{n+1}).
		\end{cases}
\end{equation}
The above constructed trajectory is right continuous with left-hand limits, 
\textit{i.e.,} $X_l(t) = \lim\limits_{s \downarrow t} X_l(s)$ and $ \lim\limits_{s \uparrow t} X_l(s)$ 
exist. Further the jumps occur exactly at those $t_n$'s
for which the corresponding $g_{n-1}$'s are non-zero.
We also define three piece-wise constant trajectories $X_c(\cdotp)$, $Y_c(\cdotp)$ and $G_c(\cdotp)$ as follows: $X_c(t) := x_n$, $Y_c(t) := y_n$ and $G_c(t) := \sum\limits_{m=0}^{n-1} g_m$ for $t \in [t_n, t_{n+1})$. 
The trajectories $X_c(\cdotp)$, $Y_c(\cdotp)$ and $G_c(\cdotp)$ are also right continuous with left-hand limits.
We define a linearly interpolated trajectory associated with $\{M_{n+1}\}_{n \ge 0}$ as follows:
\[
W_l(t) := \begin{cases}
			\sum\limits_{m=0}^{n-1} a(m) M_{m+1} \text{ for } t = t_n\\
			\left(1- \frac{t-t_n}{a(n)} \right) W_l(t_n) + \left(\frac{t - t_n}{a(n)} \right) W_l(t_{n+1})  \text{, for } t \in [t_n , t_{n+1}).
			\end{cases}
\]
We define a few ``left-shifted trajectories'' using the above constructed trajectories. For $t \ge 0$,
\[
X_l^n(t) :=  X_l(t_n+t), 
\]
\[
X_c^n(t) := X_c(t_n+t),
\]
\[
Y_c^n(t):= Y_c(t_n + t),
\]
\[
G_c^n(t):=G_c(t_n+t) - G_c(t_n),
\]
\[
W_l^n(t):=W_l(t_n+t) - W_l(t_n).
\]
\begin{lemma}
\label{san_xl}
$X_l^n(t) = X^n_l(0) + \int \limits_0^t Y_c^n(\tau) \,d\tau + \ W_l^n(t) + \ G_c^n(t) $ for $t \ge 0$.
\end{lemma}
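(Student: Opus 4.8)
The plan is to verify the claimed identity by unwinding the definitions of the left-shifted trajectories and then telescoping the recursion (\ref{san_proj}) over a block of indices. First I would fix $n$ and, for $t \ge 0$, let $m = m(t)$ be the largest index with $t_m \le t_n + t$, so that $t_n + t \in [t_m, t_{m+1})$. By the definition of $X_l$ in (\ref{san_li}), $X_l(t_n+t)$ is the linear interpolation between $x_m$ at $t_m$ and $\tilde x_{m+1}$ at $t_{m+1}$; since $\tilde x_{m+1} = x_m + a(m)[y_m + M_{m+1}]$, this reads $X_l^n(t) = x_m + \frac{t_n+t-t_m}{a(m)}\, a(m)\,[y_m + M_{m+1}] = x_m + (t_n + t - t_m)[y_m + M_{m+1}]$.

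Next I would telescope (\ref{san_proj}) from index $n$ up to index $m$: summing $x_{k+1} - x_k = a(k)[y_k + M_{k+1}] + g_k$ for $k = n, \dots, m-1$ gives $x_m = x_n + \sum_{k=n}^{m-1} a(k)[y_k + M_{k+1}] + \sum_{k=n}^{m-1} g_k$. Substituting this into the expression for $X_l^n(t)$ and adding the leftover partial-interval term $(t_n + t - t_m)[y_m + M_{m+1}]$ produces three groups: a $y$-part equal to $\sum_{k=n}^{m-1} a(k) y_k + (t_n+t-t_m) y_m$, an $M$-part of the identical shape, and a $g$-part equal to $\sum_{k=n}^{m-1} g_k$. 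Now I would match each group against the claimed right-hand side. The $y$-part is exactly $\int_0^t Y_c^n(\tau)\,d\tau$: since $Y_c^n(\tau) = Y_c(t_n+\tau) = y_k$ on the interval where $t_n + \tau \in [t_k, t_{k+1})$, integrating over $\tau \in [0,t]$ picks up $a(k) y_k$ for each full subinterval $k = n,\dots,m-1$ and $(t_n+t-t_m) y_m$ for the final partial one — here one must also observe that $t_n = t_n + 0$ sits at the left endpoint, so the first subinterval contributes $a(n) y_n$ correctly. The $M$-part is $W_l^n(t)$: by the definition of $W_l$ as the piecewise-linear interpolation of the partial sums $\sum_{m=0}^{k-1} a(m) M_{m+1}$, one checks $W_l(t_n+t) - W_l(t_n) = \sum_{k=n}^{m-1} a(k) M_{k+1} + \frac{t_n+t-t_m}{a(m)} a(m) M_{m+1}$, which is precisely the $M$-part. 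The $g$-part is $G_c^n(t) = G_c(t_n+t) - G_c(t_n) = \sum_{k=0}^{m-1} g_k - \sum_{k=0}^{n-1} g_k = \sum_{k=n}^{m-1} g_k$. Finally $X_l^n(0) = X_l(t_n) = x_n$ by (\ref{san_li}), completing the identification.

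I would also note the boundary/degenerate cases: when $t = 0$ we have $m = n$ and every sum is empty, so the identity reduces to $x_n = x_n$; and the piecewise-constant trajectories $G_c, Y_c$ together with $W_l$ are right-continuous with left limits (as already remarked after (\ref{san_li})), so the integral and the differences are well-defined for every $t \ge 0$. No assumption beyond the explicit constructions of Section~\ref{sec_analysis} is needed; (A1)--(A5) play no role here.

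The main obstacle is purely bookkeeping rather than conceptual: one has to be careful that the index $m(t)$ is used consistently across all four terms and that the ``first'' subinterval starting exactly at $t_n$ and the ``last'' partial subinterval ending at $t_n + t$ are each accounted for exactly once and with the correct weight $a(k)$ (full) or $t_n + t - t_m$ (partial). The cleanest way to avoid slips is to prove the identity first for $t$ of the form $t = t_{n+j} - t_n$ (i.e. at the interpolation nodes, where it is a clean telescoping of (\ref{san_proj})), and then extend to arbitrary $t$ by observing that both sides are affine in $t$ on each interval $[t_m - t_n, t_{m+1} - t_n)$ with matching values and slopes at the nodes.
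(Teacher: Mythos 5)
Your proposal is correct and follows essentially the same route as the paper's proof: express $X_l^n(t)$ on the partial interval containing $t_n+t$ as $x_m + (t_n+t-t_m)[y_m+M_{m+1}]$, telescope the recursion (\ref{san_proj}) back to $x_n$, and identify the three resulting sums with $\int_0^t Y_c^n(\tau)\,d\tau$, $W_l^n(t)$ and $G_c^n(t)$ respectively. The additional remarks on the degenerate case $t=0$ and on verifying the identity first at the interpolation nodes are sound but not needed beyond what the paper already does.
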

\begin{proof} {\footnotesize PROOF:}
Fix $s  \in [t_m, t_{m+1})$ for some $m \ge 0$. We have the following:
\[
X_l(s) = \left(1- \frac{s-t_m}{a(m)} \right) x_m + \left(\frac{s - t_m}{a(m)} \right) \tilde{x}_{m+1} ,
\]
\[
 = \left(1- \frac{s-t_m}{a(m)} \right) x_m + \left(\frac{s - t_m}{a(m)} \right) 
\left( x_m + a(m) \left[ y_m + M_{m+1} \right] \right),
\]
\[
 = x_m + \left(s - t_m \right) \left[ y_m + M_{m+1} \right] .
\]
Let us express $X_l^{n}(t)$ in the form of the above equation. Note that 
$t_n + t \in [t_{n+k}, t_{n+k+1})$ for some $k \ge 0$. Then we have the following:
\[
X_l^n(t) = x_{n+k} + (t_n + t - t_{n+k}) \left[ y_{n+k} + M_{n+k+1} \right].
\]
Unfolding $x_{n+k}$, in the above equation till $x_n\ (i.e., X^n_l(0))$, yields:
\begin{equation}
\label{san_xl1}
X_l^n(t) = X^n_l(0) + \sum\limits_{l=n}^{n+k-1} \left( a(l) \left[y_{l} + M_{l+1} \right] + g_l \right) + (t_n + t - t_{n+k}) \left[ y_{n+k} + M_{n+k+1} \right].
\end{equation}
We make the following observations:\\ 
$G_c^n(t) = \sum\limits_{l=n}^{n+k-1} g_l$, \\
$W_{l}^n(t_{n+k} -  t_n) = \sum\limits_{l=n}^{n+k-1} a(l) M_{l+1}$, \\
$W_l^n(t) = W_{l}^n(t_{n+k} -  t_n) +  (t_n + t - t_{n+k})M_{n+k+1}$ and \\
$\int \limits_0^t Y_c^n(\tau) \,d\tau = \sum\limits_{l=n}^{n+k-1} a(l) y_l +  (t_n + t - t_{n+k}) y_{n+k}$.\\
As a consequence of the above observations, (\ref{san_xl1}) becomes:
\[
X_l^n(t) = X^n_l(0) +  \int \limits_0^t Y_c^n(\tau) \,d\tau + \ W_l^n(t) + \ G_c^n(t).
\] 
\hfill $\blacksquare$
\end{proof} $\\$
Fix $T >0$. If $\{X_l^n([0,T]) \mid n \ge 0\}$ and $\{G_c^n([0,T]) \mid n \ge 0\}$ are viewed as 
subsets of $D([0,T], \mathbb{R}^d)$ equipped with the Skorohod topology, then we may use the 
Arzela-Ascoli
theorem for $D([0,T], \mathbb{R}^d)$ 
to show that they are relatively compact, see Billingsley \cite{billingsley} for details. The Arzela-Ascoli theorem for $D([0,T], \mathbb{R}^d)$ states the following:
A set $S \subseteq D([0,T], \mathbb{R}^d)$, is relatively compact if and only if the following conditions are satisfied:
\begin{itemize}
\item[] $\underset{x(\cdotp) \in S}{\sup} \ \underset{t \in [0,T]}{\sup} \lVert x(t) \rVert < \infty$,
\item[] $\lim\limits_{\delta \to 0}$ $\underset{x(\cdotp) \in S}{\sup}$ $\underset{t_1 \le t \le t_2,\ t_2 - t_1 \le \delta}{\sup} \min \left\{ \lVert x(t) - x(t_1) \rVert, \lVert x(t_2) - x(t) \rVert \right\} = 0$,
\item[] $\lim \limits_{\delta \to 0}$ $\underset{x(\cdotp) \in S}{\sup}$ $\underset{t_1, t_2 \in [0, \delta)}{\sup} \lVert x(t_2) -  x(t_1) \rVert = 0$ and
\item[] $\lim \limits_{\delta \to 0}$ $\underset{x(\cdotp) \in S}{\sup}$ $\underset{t_1, t_2 \in [T-\delta, T)}{\sup} \lVert x(t_2) -  x(t_1) \rVert = 0$.
\end{itemize}
If $\{X_l^n([0,T]) \mid n \ge 0\}$ and $\{G_c^n([0,T]) \mid n \ge 0\}$ 
are point-wise bounded and any two of their discontinuities are separated by
at least $\Delta$, for some fixed $\Delta > 0$,
then the above four conditions will be satisfied, see \cite{billingsley} for details.
\begin{lemma}
\label{san_rc}
$\{X_l^n([0,T]) \mid n \ge 0\}$ and $\{G_c^n([0,T]) \mid n \ge 0\}$ are relatively compact in 
$D([0,T], \mathbb{R}^d)$ equipped with the Skorohod topology.
\end{lemma}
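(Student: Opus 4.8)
The plan is to verify the two displayed sufficient conditions from the Arzela--Ascoli theorem for $D([0,T],\mathbb{R}^d)$ that are highlighted just before the lemma: namely, that the families $\{X_l^n([0,T]) \mid n \ge 0\}$ and $\{G_c^n([0,T]) \mid n \ge 0\}$ are pointwise (in fact uniformly) bounded on $[0,T]$, and that within any such family any two discontinuities are separated by at least some fixed $\Delta > 0$. Granting these two facts, relative compactness follows from the cited result in Billingsley \cite{billingsley}, so there is nothing further to prove.

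For the separation-of-jumps claim, I would argue as follows. By construction, the jumps of $X_l(\cdotp)$ — and hence of each left-shifted $X_l^n(\cdotp)$ — occur precisely at the times $t_m$ for which $g_{m-1} \ne 0$, and the discontinuities of $G_c^n(\cdotp)$ occur at the same set of times; so it suffices to bound below the gap $t_{m'} - t_m$ between two consecutive indices at which the projection is active. A projection can only fire when the unprojected update $\tilde{x}_{m+1}$ escapes $\mathcal{C}$; immediately after projection $x_{m+1} \in \overline{\mathcal{B}}$, and since $\overline{\mathcal{B}} \subseteq \mathcal{C}$ with $\mathcal{C}$ open, there is a fixed radius $\rho > 0$ (the distance from $\overline{\mathcal{B}}$ to $\partial\mathcal{C}$) that must be traversed before the next escape. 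Using $(A1)$, the iterates within $\overline{\mathcal{C}}$ have uniformly bounded one-step increments $\lVert x_{k+1} - x_k \rVert \le a(k)\bigl(K(1+\sup_{x\in\overline{\mathcal{C}}}\lVert x\rVert) + D\bigr) =: a(k)\,C_0$, so traversing distance $\rho$ requires $\sum a(k) \ge \rho/C_0$ worth of step-sizes, i.e.\ an elapsed interpolated time of at least $\Delta := \rho/C_0 > 0$. This gives the required uniform separation of discontinuities for both families.

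For boundedness, I would first handle $X_c^n$ and $X_l^n$: since every $x_n$ lies in $\overline{\mathcal{B}} \cup \mathcal{C} \subseteq \overline{\mathcal{C}}$, which is compact, $\sup_n \sup_t \lVert X_c^n(t)\rVert < \infty$, and then $X_l^n(t)$ differs from $X_c^n(t)$ on $[t_{n+k}, t_{n+k+1})$ by at most $(t_n+t-t_{n+k})\lVert y_{n+k} + M_{n+k+1}\rVert \le a(n+k) C_0 \to 0$ uniformly, giving uniform boundedness of the $X_l^n$ family. For $G_c^n$, I would use the decomposition from Lemma~\ref{san_xl}, $G_c^n(t) = X_l^n(t) - X_l^n(0) - \int_0^t Y_c^n(\tau)\,d\tau - W_l^n(t)$: the first two terms are bounded by the previous step, $\lVert\int_0^t Y_c^n(\tau)\,d\tau\rVert \le T \cdot C_0$ since $Y_c^n$ takes values $y_l \in H(x_l)$ with $x_l \in \overline{\mathcal{C}}$ and is integrated over a time window of length at most $T$, and $\lVert W_l^n(t)\rVert$ is bounded using $(A3)$ — either crudely via $\lVert M_{n+1}\rVert \le D$ together with $\sum a(n)^2 < \infty$ and the fact that over a window of length $T$ one accumulates boundedly many such increments, or directly from $(A3)(ii)$ which forces the relevant martingale-type sums to stay bounded (indeed $\to 0$) over windows of length $T$. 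Summing these bounds yields $\sup_n \sup_{t \in [0,T]} \lVert G_c^n(t)\rVert < \infty$.

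The main obstacle is the jump-separation estimate: one must be careful that the constant $\Delta$ is genuinely uniform, which relies on $(A2)$ only through $a(n) \to 0$ (a consequence of $\sum a(n)^2 < \infty$) to guarantee that a single step cannot by itself jump across the collar between $\overline{\mathcal{B}}$ and $\partial\mathcal{C}$, and on the geometry $\overline{\mathcal{B}} \subseteq \mathcal{C}$ with $\mathcal{C}$ open so that $\rho > 0$. Once $\Delta$ is pinned down, the rest is the routine bookkeeping sketched above, and the four Arzela--Ascoli conditions then hold automatically as stated in the paragraph preceding the lemma.
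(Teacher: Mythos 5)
Your proposal is correct and follows essentially the same route as the paper: verify the Arzela--Ascoli conditions by establishing pointwise boundedness (via compactness of $\overline{\mathcal{C}}$ and the Marchaud bound on $H$) and a uniform separation $\Delta>0$ between discontinuities, obtained as the positive gap between $\overline{\mathcal{B}}$ and $\partial\mathcal{C}$ divided by the uniform per-unit-time increment bound. The only cosmetic difference is that you bound $G_c^n$ through the decomposition of Lemma~\ref{san_xl}, whereas the paper bounds each $g_n$ directly; both reduce to the same estimates.
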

\begin{proof} {\footnotesize PROOF:}
Recall from $(A2)(i)$ that $\lVert M_{n+1} \rVert \le D$, $n \ge 0$.
Since $H$ is Marchaud, it follows that 
$\underset{x \in \overline{\mathcal{C}},\ y \in H(x)}{\sup} \lVert y \rVert \le C_1$ for some $C_1 > 0$ 
and that $\underset{n \ge 0}{\sup}\ \lVert \tilde{x}_{n+1} - x_{n} \rVert \le 
\left(\sup \limits_{m \ge 0} a(m)\right) (C_1 + D)$. 
Further, $\lVert g_n \rVert \le \lVert \tilde{x}_{n+1} - x_{n} \rVert +  d(x_n , \partial \mathcal{B}) \le C_2$ for 
some constant $C_2$ that is independent of $n$. 
In other words, we have that the sequences $\{X_l^n([0,T]) \mid n \ge 0\}$ and $\{G_c^n([0,T]) 
\mid n \ge 0\}$ are point-wise bounded. It remains to show that any two discontinuities are separated. 
Let $\tilde{d} := \underset{x \in \partial \mathcal{C}}{\min}\ d(x, \overline{B})$ and $D_1 := D\ +\ \underset{x \in \overline{\mathcal{C}}}{\sup} \underset{y \in H(x)}{\sup} \lVert y \rVert$. Clearly $\tilde{d} > 0$. Define
\[
m(n) = \max \left\{ j > 0 \ \mid \ \sum \limits_{k=n}^{n+j} a(k) < \tilde{d}/D_1 \right\}.
\]
If there is a jump at $t_n$, then $x_n \in \partial \mathcal{B}$. It follows from the definition of $m(n)$ that $x_k \in \mathcal{C}$ for $n \le k \le m(n)$. In other words, there are no discontinuities in the interval
$[t_n, t_{n + m(n)})$ and $t_{n + m(n)} - t_n \ge \frac{\tilde{d}}{2 D_1}$. 
If we fix $\Delta := \frac{\tilde{d}}{2 D_1} > 0$, then any two discontinuities are separated by at least $\Delta$.  \hfill $\blacksquare$
\end{proof}$\\$
\textit{Since $T$ is arbitrary, it follows that $\{X_l^n([0,\infty)) \mid n \ge 0\}$ and $\{G_c^n([0,\infty)) \mid n \ge 0\}$ are relatively compact in $D([0,\infty), \mathbb{R}^d)$}.
Since $\{W_l^n([0,T]) \mid n \ge 0\}$ is point-wise bounded (assumption $(A2)(i)$) and continuous, it is relatively compact in $D([0,T], \mathbb{R}^d)$. It follows from $(A2)(ii)$ that
any limit of $\{W_l^n([0,T]) \mid n \ge 0\}$, in $D([0,T], \mathbb{R}^d)$, is the constant $0$ function.
Suppose we consider sub-sequences of $\{X_l^n([0,T]) \mid n \ge 0\}$ and $\left\{X^n_l(0) +  \int \limits_0^T Y_c^n(\tau) \,d\tau + \ G_c^n(T) \mid n \ge 0 \right\}$ along
which the aforementioned noise trajectories converge, then their limits are identical. 
\paragraph{}
Consider $\{m(n)\}_{n \ge 0} \subseteq \mathbb{N}$ along which $\{G_c^{m(n)}([0,T])\}_{n \ge 0}$
and $\{X_l^{m(n)}([0,T])\}_{n \ge 0}$ converge in $D([0,T], \mathbb{R}^d)$. Further, let
$g_{m(n)-1}=0$ for all $n \ge 0$. Suppose the limit of 
$\{G_c^{m(n)}([0,T])\}_{n \ge 0}$ is the constant $0$ function, then it can be shown that the limit of
$\{X_l^{m(n)}([0,T])\}_{n \ge 0}$ is
\[
 X(t) = X(0) + \int \limits_0 ^t Y(\tau) d\tau,
\]
where $X(0) \in \overline{\mathcal{C}}$ and $Y(t) \in H(X(t))$ for $t \in [0,T]$. The proof of this is
along the lines of the proof of \textit{Theorem 2 in Chapter 5.2} of Borkar \cite{BorkarBook}. Suppose
every limit of $\{G_c^{m(n)}([0,T])\}_{n \ge 0}$ is the constant $0$ function whenever
$g_{m(n)-1}=0$ for all $n \ge 0$, then every limit of $\{X_l^{m(n)}([0,T])\}_{n \ge 0}$
is a solution to $\dot{X}(t) \in H(X(t))$. Suppose we show that the aforementioned statement is true for
every $T > 0$. Then, along with $(A4)$ the stability of (\ref{sa_sri}) is implied. 
Note that the set $K : = \{n \mid g_{n} = 0\}$ has infinite cardinality since
any two discontinuities are at least $\Delta > 0$ apart.
\begin{lemma}
\label{san_gis0}
Let $K = \{ n \mid g_n  = 0\}$. Without loss of generality let $\{X_l^n([0,T])\}_{n \in K}$ and $\{G_c^n([0,T])\}_{n \in K}$
be convergent, as $n \to \infty$, in $D([0,T], \mathbb{R}^d)$. Then $X_l^n(t) \to X(0) + \int \limits_{s = 0}^{t} Y(s) \ ds + G(t)$ for $t \in [0,T]$ and
$G(\cdotp) \equiv 0$.
\end{lemma}
\begin{proof} {\footnotesize PROOF:}
We begin by making the following observations: 
\begin{itemize}
\item[(a)] $X(0) \in \overline{\mathcal{C}}$.
\item[(b)] Any two discontinuities of $X(\cdotp)$ are at least $\Delta$ apart.
\item[(c)] $G(0) = 0$.
\item[(d)] Solutions to $\dot{x}(t) \in H(x(t))$ with starting points in $\overline{\mathcal{C}}$ will 
not hit the boundary, $\partial \mathcal{C}$, later, \textit{i.e.,} they remain in the interior of 
$\mathcal{C}$. This observation is a consequence of Proposition~\ref{sa_inward}.
\end{itemize}
It follows from the above observations that $(i)$ $X(t) \in \mathcal{C}$ for small values of $t$, $(ii)$ $\tau : = \inf \{ t  \mid t > 0, X(t^+) \neq X(t^-) \}$ and $\tau > 0$. It follows from the nature
of convergence that $\exists \ \tau_n' > \tau > \tau_n$, $n \ge 0$ such that 
\[
\tau_n ' - \tau_n \to 0,
\]
\[
\lVert X_l^n(\tau_n ') - X(\tau^+) \rVert \to 0 \text{ and}
\]
\[
\lVert X_l^n(\tau_n) - X(\tau^-) \rVert \to 0.
\]
For large values of $n$, $X_l^n(\cdotp)$ has exactly one jump (point of discontinuity) 
at a point $\hat{\tau}_n \in [\tau_n, \tau_n']$.
Let $\delta := \lVert X(\tau^+) -  X(\tau^-) \rVert > 0$, then for large values of $n$ we have
\[
\lVert X_l^n(\hat{\tau}_{n}^+) - X_l^n(\hat{\tau}_{n}^-) \rVert \ge \delta / 2.
\]
Also, $X_l^n(\hat{\tau}_{n}^-)$ is not in $\mathcal{C}$ and $ X_l^n(\hat{\tau}_{n}^+)$ is in $\partial \mathcal{B}$. Further, since $\hat{\tau}_n ^- - \tau_n \to 0$, as $n \to \infty$,
it follows that
\[
X_l^n(\hat{\tau}_{n}^-) - X(\tau ^-) \to 0.
\]
Hence, $X(\tau^-) \notin \mathcal{C}$. Similarly, we have that 
$X(\tau^+) \in \partial \mathcal{B}$. 
Observe that $X([0, \tau))$ is a solution to $\dot{x}(\cdotp) \in H(x(\cdotp))$ such that $X(0) \in \overline{\mathcal{C}}$, since $G(t) = 0$ for $t \in [0,\tau)$. Further, since $\mathcal{C}$ is \textit{inward
directing}, we have that $X(t) \in \mathcal{C}$ for $t \in [0, \tau)$.
Since $X(t) \in \mathcal{C}$ for $t < \tau$ and $X(\tau ^-) \notin \mathcal{C}$ we have $X(\tau ^-) \in \partial 
\mathcal{C}$.
\paragraph{}
Since $X(0) \in \overline{\mathcal{C}}$, we have that $V(X(0)) \le R$, for some $0 < R < \infty$.
As a consequence of our choice of $\mathcal{C}$ 
($\mathcal{C}$ is $\mathcal{C}_a$/$\mathcal{C}_b$/$\mathcal{C}_c$ within the context of $(A3a)$/$(A3b)$/$(A3c)$)
we have $V(x) = V(y)$ for any $x,y
\in \partial C$, hence we may fix $R := V(x)$ for any $x \in \partial C$. 
Fix $\tau_0 \in (0, \tau)$, it follows from Proposition~\ref{sa_inward} that
$V(X(\tau_0)) < R$. Let $t_n \uparrow \tau$ such that $t_n \in (\tau_0, \tau)$ for $n \ge 1$. Without loss of generality, $X(t_n) \to X(\tau^-)$ and $V(X(t_n)) \to V(X(\tau^-))$, as $t_n \to \tau$ (else we may choose a subsequence
of $\{t_n\}_{n \ge 0}$ along which $V(X(t_n))$ is convergent). Thus, $\exists \ N$ such that $V(X(t_n)) > V(X(\tau_0))$ for $n \ge N$. Since $X([\tau_0, t_n])$ is a solution to $\dot{x}(t) \in H(x(t))$ with starting point $X(\tau_0)$,
the aforementioned conclusion contradicts $(A3a)(iii)$/$(A3b)(iii)$/$(A3c)(iii)$.
In other words, $X(\tau^-) \in \mathcal{C}$ and $\notin \partial \mathcal{C}$. Thus we have shown that there is no jump at $\tau$, \textit{i.e.,} $X(\tau^+)  = X(\tau^-)$. \hfill $\blacksquare$
\end{proof}$\\$
Suppose $(A3a)$ / $(A3b)$ holds, then it follows from \textit{Proposition 3.25} of Bena\"{i}m, Hofbauer 
and Sorin \cite{Benaim05} that there is an attractor set $\mathcal{A} \subseteq \Lambda$ such that
$\overline{\mathcal{C}}_a$ / $\overline{\mathcal{C}}_b$ is within the basin of attraction. 
Suppose $(A3c)$ holds, then $\mathcal{A}$ is the global attractor of
$\dot{x}(t) \in H(x(t))$.
\begin{lemma}
\label{san_conv2A}
The projective stochastic approximation scheme given by (\ref{san_proj1}) converges to the 
attractor $\mathcal{A}$.
\end{lemma}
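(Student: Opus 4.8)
The plan is to identify the limit set of the iterates $\{x_n\}$ generated by \eqref{san_proj1} as a subset of the attractor $\mathcal{A}$, combining the asymptotic-pseudotrajectory machinery already assembled in this section with the basin-of-attraction structure recorded just above the statement. As a preliminary I would note that \eqref{san_proj1} is stable by construction: since $\overline{\mathcal{B}}\subseteq\mathcal{C}$, every iterate satisfies $x_n\in\mathcal{C}\subseteq\overline{\mathcal{C}}\subseteq\mathcal{U}$ (whenever $\tilde{x}_{n+1}\notin\mathcal{C}$ the next iterate is taken in $\overline{\mathcal{B}}\subseteq\mathcal{C}$). Hence $\{x_n\}$ is bounded and has a nonempty compact limit set $\mathcal{L}\subseteq\overline{\mathcal{C}}$.

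The main step is to show that $\mathcal{L}$ is internally chain transitive for $\dot{x}(t)\in H(x(t))$. Fix $T>0$ and $p\in\mathcal{L}$, and pick $n_k\uparrow\infty$ with $x_{n_k}\to p$. Since consecutive discontinuities of the interpolated trajectory are at least $\Delta$ apart in rescaled time while $a(n)\to 0$, the number of indices between successive discontinuities tends to infinity, so we may take $n_k\in K=\{n\mid g_n=0\}$. Passing to a subsequence along which $X_l^{n_k}(\cdot)$, $G_c^{n_k}(\cdot)$ and $W_l^{n_k}(\cdot)$ converge in $D([0,T],\mathbb{R}^d)$, Lemma~\ref{san_gis0} shows that the limit of $X_l^{n_k}(\cdot)$ is a genuine solution $X(\cdot)$ of $\dot{x}\in H(x)$ with $X(0)=p$ and no reflection jump ($G\equiv 0$), the inward-directedness of $\mathcal{C}$ supplied by Proposition~\ref{sa_inward} being precisely what forces the jump to vanish in the limit. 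Every such limiting solution stays in $\mathcal{L}$; letting $T$ vary and concatenating these solutions through the limit points of $\{x_n\}$, exactly as in the proof of Theorem~2, Chapter~5.2 of Borkar~\cite{BorkarBook}, produces $(\epsilon,T)$-chains inside $\mathcal{L}$ joining any two of its points. Hence $\mathcal{L}$ is internally chain transitive for $H$ and is contained in $\overline{\mathcal{C}}$.

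I would then invoke the attractor structure. Under $(A4a)$ or $(A4b)$, Proposition~3.25 of \cite{Benaim05} furnishes an attractor $\mathcal{A}\subseteq\Lambda$ whose basin of attraction contains $\overline{\mathcal{C}}$; under $(A4c)$, $\mathcal{A}$ is the global attractor. In either case $\mathcal{L}$ is an internally chain transitive set of $\dot{x}\in H(x)$ lying inside the basin of attraction of $\mathcal{A}$, and since an internally chain transitive set contained in the basin of an attractor is contained in that attractor (see \cite{Benaim05}, and \cite{BorkarBook}), we get $\mathcal{L}\subseteq\mathcal{A}$. Because $\{x_n\}$ is bounded with limit set $\mathcal{L}$, this yields $d(x_n,\mathcal{A})\to 0$, i.e. \eqref{san_proj1} converges to $\mathcal{A}$.

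The step I expect to be the real obstacle is establishing internal chain transitivity of $\mathcal{L}$ for the \emph{unprojected} inclusion, i.e. verifying that the projection corrections $g_n$ leave no surviving boundary-reflection term in the limiting dynamics; this is exactly where inward-directedness (Proposition~\ref{sa_inward}) and Lemma~\ref{san_gis0} do the work, and some care is needed for the indices lying near discontinuities, handled through the uniform $\Delta$-separation of discontinuities in rescaled time established in Lemma~\ref{san_rc}.
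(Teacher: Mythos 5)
Your proposal is correct and follows essentially the same route as the paper: Lemma~\ref{san_gis0} (via the inward-directing property of $\mathcal{C}$ from Proposition~\ref{sa_inward}) eliminates the reflection terms so that the iterates track a solution of $\dot{x}(t)\in H(x(t))$ inside $\overline{\mathcal{C}}$, and convergence to $\mathcal{A}$ then follows because $\overline{\mathcal{C}}$ lies in the basin of attraction of $\mathcal{A}$. The only difference is presentational: you formalize the tracking step through internal chain transitivity of the limit set and the standard fact that an ICT set meeting the basin of an attractor lies in the attractor, whereas the paper argues more tersely that projections eventually cease and the resulting ordinary scheme converges within the basin.
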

\begin{proof} {\footnotesize PROOF:}
We begin by noting that $T$ of Lemma~\ref{san_gis0} is arbitrary.
Since $G \equiv 0$, after a certain number of iterations of (\ref{san_proj1}), there are no projections, 
\textit{i.e.,} $\tilde{x}_{n} = x_{n}$ for $n \ge N$. Here $N$ could be sample path dependent.
Further, it follows from Lemma~\ref{san_gis0} that the projective scheme given by (\ref{san_proj1}) tracks a solution to $\dot{x}(t) \in H(x(t))$. In other words, the projective scheme given by (\ref{san_proj1})
converges to a limit point of the $DI$, $\dot{x}(t) \in H(x(t))$.

The iterates given by (\ref{san_proj1}) are within $\mathcal{C}$ after sometime and they track a solution to $\dot{x}(t) \in H(x(t))$. Since $\mathcal{C}$ is within the basin of attraction
of $\mathcal{A}$, the iterates converge to $\mathcal{A}$. \hfill $\blacksquare$
\end{proof}$\\$

The Lemmas proven in this section yield Theorem~\ref{sc_main}, the main result concerning the stability of \eqref{sa_sri}, stated below. Then, \textbf{it follows from \textit{Theorem 3.6 and Lemma 3.8}
of Bena\"{i}m \cite{Benaim05} that \eqref{sa_sri}
converges to a closed connected internally chain transitive invariant set
associated with $\dot{x}(t) \in H(x(t))$}.
\begin{theorem}
\label{sc_main}
Under $(A1)$-$(A4)$ and $(AV4)$, the set-valued SA given by (\ref{sa_sri}) is stable (bounded almost surely).
\end{theorem}
\begin{proof} {\footnotesize PROOF:}
Let $\{ \hat{x}_n \}_{n \ge 0}$ be the iterates generated by the projective scheme
and $\{x_n\}_{n \ge 0}$ the iterates generated by (\ref{sa_sri}). 
It follows from Lemma~\ref{san_conv2A} that $\hat{x}_n \to \mathcal{A}$ such that $\mathcal{A} \subset \mathcal{B}$. In other words
there exists $N$, possibly sample path dependent, such that $\hat{x}_n \in \mathcal{B}$ for all $n \ge N$. Without loss of generality, this is the same $N$ from $(A4)$, else one can use the maximum of the two.
Since $\mathcal{B}$ is a bounded set, we have that $\underset{n \ge N}{\sup} \lVert \hat{x}_n \rVert \le 
\underset{y \in \mathcal{B}}{\sup} \lVert y \rVert < \infty$.
It now follows from $(A4)$ that $\underset{n \ge N}{\sup} \lVert x_n \rVert < \infty$.  This directly leads to the stability of (\ref{sa_sri}). \hfill $\blacksquare$
\end{proof}$\\$
\subsection{Relaxing assumption (A2)} \label{sec_gn}
The above stated Theorem~\ref{sc_main} does not guarantee the stability of AVI, \eqref{avi_avi}. This is because, the analysis involved in proving the aforementioned theorem requires that the Martingale noise satisfy $(A2)$, instead of the weaker $(AV5)$. Since only $(AV5)$ is guaranteed for AVI, we need to prove Theorem~\ref{gn_main}, stated at the end of Section~\ref{sec_sa}. It is a generalization of Theorem~\ref{sc_main}, in that it requires the less restrictive $(AV5)$.
For the benefit of the reader we recall $(AV5)$ below:
\begin{itemize}
\item[\textbf{(AV5)}] $(M_n, \mathcal{F}_n)_{n \ge 1}$ is a square integrable Martingale difference 
sequence $\bigg( E[M_{n+1} \mid \mathcal{F}_n] = 0$ and $EM_{n+1}^2 < \infty$, $n \ge 0 \bigg)$ such that
\[
 E\left[ \lVert M_{n+1} \rVert ^2 \mid \mathcal{F}_n \right] \le K(1 + \lVert x_n \rVert ^2), \text{ where $n \ge 0$ and $K > 0$.}
\]
\end{itemize}
$\\$
In the above, $\mathcal{F}_0 := \sigma\langle x_0 \rangle$ and 
$\mathcal{F}_n := \sigma \left\langle x_0, x_1, \ldots, x_n, M_1, \ldots, M_n \right\rangle$ for $n \ge 1$. Further, without loss of generality,
we may assume that $K$ in $(AV5)$ and $(A1)$ are equal (otherwise we can use maximum of the two constants).

In what follows, we try to understand the role of $(A2)$ in the analysis presented in Section~\ref{sec_analysis}. Then, we show that it can be readily modified to require the less restrictive $(AV5)$.
In analyzing the projective scheme given by (\ref{san_proj1}), assumption $(A2)$ 
is used in Lemma~\ref{san_rc}. Specifically, $(A2)(i)$ is used to show that any two discontinuities
of $\{X^l_n([0,T])\}_{n\ge 0}$ and $\{ G_c^n([0,T]) \}_{n \ge 0}$ are separated by at least $\Delta > 0$. 
We show that the aforementioned property holds when $(A2)$
is replaced by $(AV5)$. First, we prove an auxiliary result.

\begin{lemma} \label{gn_1}
Let $\{t_{m(n)}, t_{l(n)}\}_{n \ge 0}$ be such that $t_{l(n)} > t_{m(n)}$, $t_{m(n+1)} > t_{l(n)}$ 
and $\underset{n \to \infty}{\lim} (t_{l(n)} - t_{m(n)}) = 0$. Fix an arbitrary $c > 0$ and consider the following:
\[
\psi_n := \left \lVert \sum \limits_{i = m(n)}^{l(n)-1} a(i) M_{i+1} \right \rVert.
\]
Then $P \left( \{\psi_n > c\}\  i.o. \right) = 0$ within the context of 
the projective scheme given by (\ref{san_proj1}).
\end{lemma}
\begin{proof} {\footnotesize PROOF:}
We shall show that $\sum \limits_{n \ge 0} P(\psi_n > c) < \infty$. It follows from Chebyshev's 
inequality that
\[
P(\psi_n > c) \le \frac{E \psi^2 _n}{c^2} = 
\frac{E \left[ \left \lVert \sum \limits_{i = m(n)}^{l(n)-1} a(i) M_{i+1} \right \rVert ^2 \right]}{c^2}.
\]
Since $\{M_{n+1}\}_{n \ge 0}$ is a martingale difference sequence, we get:
\begin{equation}
\label{sc_che}
P(\psi_n > c) \le \frac{ \sum \limits_{i = m(n)}^{l(n)-1} a(i)^2 E \left[\lVert M_{i+1}\rVert^2 \right]}{c^2}.
\end{equation}
Within the context of the projective scheme given by (\ref{san_proj1}), 
almost surely $\forall \ n, \ x_n \in \overline{\mathcal{C}}$, \textit{i.e.,} $\underset{n \ge 0}{\sup} \ \lVert x_n \rVert \le C_1 < \infty$ a.s.
It follows from $(AV5)$ that $E \left[ \left \lVert M_{n+1} \right \rVert ^2 \right] \le K \left(1 + E \lVert x_n \rVert ^2  \right)$. Hence,
$E \left[ \left \lVert M_{n+1} \right \rVert ^2 \right] \le K \left(1 + C_1 ^2  \right)$. 
Equation (\ref{sc_che}) becomes
\[
P(\psi_n > c) \le \frac{ \sum \limits_{i = m(n)}^{l(n)-1} a(i)^2 K \left(1 + C_1 ^2  \right)}{c^2}.
\]
Since $t_{l(n)} > t_{m(n)}$ and $t_{m(n+1)} > t_{l(n)}$, 
we have $\sum \limits_{n \ge 0} \sum \limits_{i = m(n)}^{l(n)-1} a(i)^2 \le \sum \limits_{n \ge 0} a(n)^2$.
Finally we get,
\[
\sum \limits_{n \ge 0} P(\psi_n > c) \le \frac{ \left(\sum \limits_{n \ge 0} a(n)^2 \right) 
K \left(1 + C_1 ^2  \right)}{c^2} < \infty.
\] The claim now follows from the Borel-Cantelli lemma. \hfill $\blacksquare$
\end{proof} $\\$
Let us consider the scenario in which we cannot find $\Delta$, 
the least separation between any two points of discontinuity. 
In other words, there exists $\{ t(m(n)), t(l(n)) \}_{n \ge 0}$ such that
$t_{l(n)} > t_{m(n)}$, $t_{m(n+1)} > t_{l(n)}$ and $\underset{n \to \infty}{\lim}(t_{l(n)} - t_{m(n)}) = 0$. Since we have assumed that there are no jumps between $t(m(n))$ and $t(l(n))$, we have $X_l(t_{m(n)}^+) \in \partial \mathcal{B}$ and $X_l(t_{l(n)}^-) \notin \mathcal{C}$ for all $n \ge 0$. The reader may note that every jump-point corresponds to a point in time, when the algorithm escapes $\mathcal{C}$.
We have 
\[
X_l(t_{l(n)}^-) = X_l(t_{m(n)}) + \sum \limits_{i=m(n)}^{l(n)-1} a(i) \left( y_i + M_{i+1} \right).
\]
We have that $\underset{n \ge 0}{\sup}\ \lVert y_n \rVert \le D'$ for some $0 < D' < \infty$, and $\tilde{d} = \min \limits_{x \in \partial \mathcal{C}} d(x, \overline{\mathcal{B}})$. 
The above equation becomes
\[
\lVert X_l(t_{l(n)}^-) - X_l(t_{m(n)}) \rVert \le \sum \limits_{i=m(n)}^{l(n)-1} a(i) D' +  \left\lVert \sum \limits_{i=m(n)}^{l(n)-1} a(i) M_{i+1} \right\rVert,
\]
\[
\tilde{d} \le \lVert X_l(t_{l(n)}^-) - X_l(t_{m(n)}) \rVert \le \left( t_{l(n)} - t_{m(n)} \right) D' +  \left\lVert \sum \limits_{i=m(n)}^{l(n)-1} a(i) M_{i+1} \right\rVert.
\]
Since $(t_{l(n)} - t_{m(n)}) \to 0$, for large $n$, 
$\left\lVert \sum \limits_{i=m(n)}^{l(n)-1} a(i) M_{i+1} \right\rVert > d/2$. 
This directly contradicts Lemma~\ref{gn_1}.
Hence we can always find $\Delta > 0$ separating any two points of discontinuity.
\paragraph{}
In Lemma~\ref{san_conv2A}, $(A2)$ is used to ensure the convergence of
(\ref{san_proj1}) to the attractor $\mathcal{A}$. In Theorem~\ref{sc_main}, $(A2)$ is used to
ensure the convergence of $(\ref{sa_sri})$ 
to a closed connected internally chain transitive invariant set
of the associated $DI$. Specifically, it is $(A2)(ii)$ that ensures these convergences.
Let us define $\zeta_n := \sum \limits_{k=0}^{n-1} a(k) M_{k+1}$, $n \ge 1$. 
If $\{\zeta_n\}_{n \ge 1}$ converges, then it trivially follows that the martingale noise sequence
satisfies $(A2)(ii)$. To show convergence, it is enough to show that the corresponding 
quadratic variation process converges almost surely. In other words, we need to show that 
$\sum \limits_{n \ge 0} a(n)^2 E \left( \lVert M_{n+1} \rVert ^2 | \mathcal{F}_n \right) < \infty$ a.s or
$\sum \limits_{n \ge 0} E \left( a(n)^2 \lVert M_{n+1} \rVert ^2 \right) < \infty$. Consider the
following:
\begin{equation}
 \label{gn_mart}
 \sum \limits_{n \ge 0}  a(n)^2 E\lVert M_{n+1} \rVert ^2 = 
 \sum \limits_{n \ge 0} a(n)^2 E\left[ E \left[ \lVert M_{n+1} \rVert ^2 \mid \mathcal{F}_n \right] \right]
 \le \sum \limits_{n \ge 0} a(n)^2 K(1 + E \lVert x_n \rVert ^2).
\end{equation}
Convergence of the quadratic variation process in the context of Lemma~\ref{san_conv2A}
follows from (\ref{gn_mart}) and the fact that
$E\lVert x_n \rVert ^2 \le \underset{x \in \overline{\mathcal{C}}}{\sup}\ \lVert x \rVert ^2$. In other
words,
\[
 \sum \limits_{n \ge 0} a(n)^2 E \left[ \lVert M_{n+1} \rVert ^2 \right] 
 \le \sum \limits_{n \ge 0} a(n)^2 K(1 + 
 \underset{x \in \overline{\mathcal{C}}}{\sup}\ \lVert x \rVert ^2) < \infty.
\]
Similarly, for convergence in Theorem~\ref{sc_main}, it follows from
(\ref{gn_mart}) and stability of the iterates ($\underset{n \ge 0}{\sup} \lVert x_n \rVert < \infty$ a.s.) that
\[
 \sum \limits_{n \ge 0} a(n)^2 E \left[ \lVert M_{n+1} \rVert ^2 | \mathcal{F}_n \right]
 \le \sum \limits_{n \ge 0} a(n)^2 K(1 + 
 \underset{n \ge 0}{\sup}\ \lVert x_n \rVert ^2) < \infty \ a.s.
\]
In other words, both in Lemma~\ref{san_conv2A} and Theorem~\ref{sc_main}, assumption $(A2)(ii)$
is satisfied. This gives us Theorem~\ref{gn_main}, a generalization of Theorem~\ref{sc_main}, stated at the end of Section~\ref{sec_sa}.
\subsection{Verifiability of $(A4)$} \label{sec_note}
The verifiable sufficient conditions for $(A4)$, along with the required proof are presented in the form of Lemma~\ref{note_lemma}, below. The statement of this lemma is presented for a slightly more general from of a set-valued iteration, given by $x_{n+1} \in G_n (x_n, \xi_n), \ n \ge 0$. If we define $\xi_n := M_{n+1}$, $x_n:= J_n$ and $G_n(J_n, M_{n+1}) := J_n + a(n)\left(TJ_n - J_n + \overline{B}_{\epsilon_n}(0)  + M_{n+1} \right)$, then it is easy to see that AVI, \eqref{avi_avi}, has the aforementioned set-valued iterative form.


\begin{lemma}
 \label{note_lemma}
 Let $\mathcal{B}$ and $\mathcal{C}$ be open bounded subsets of $\mathbb{R}^d$ such that $\overline{\mathcal{B}}
 \subset \mathcal{C}$. Consider the algorithm
 \[
  x_{n+1} \in G_n (x_n, \xi_n), \ n \ge 0.
 \]
We make the following assumptions:
\begin{enumerate}
 \item $\{\xi_n\}_{n \ge 0}$ is a random sequence that constitutes noise.
 \item $G_n$ is an almost-surely-bounded diameter and contractive (in the first co-ordinate with second co-ordinate fixed) set-valued map, with respect to some metric $\rho$, for $n \ge 0$. In other words, $H_\rho(G_n(x, \xi), G_n(y, \xi)) \le \alpha \rho(x,y)$ for some $0 < \alpha < 1$ and
 $\sup \limits_{u,v \in G_n(x, \xi)} \rho(u,v) \le \hat{D}$, where $0 < \hat{D} < \infty$ and $x \in \mathbb{R}^d$.
 \item The projective sequence $\{\hat{x}_n\}_{n \ge 0}$ generated by 
 \[
  \hat{x}_{n+1} \in G_n \left(\underset{\mathcal{B},\mathcal{C}}{\pro} (\hat{x}_n), \xi_n \right)
 \]
converges to some vector $x^* \in \mathcal{B}$.
\end{enumerate}
Then, almost surely, $\exists N < \infty$ such that $\sup \limits_{n \ge N} \rho(\hat{x}_n, x_n) < \infty$.
\end{lemma}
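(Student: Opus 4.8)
The plan is to reuse the stability template already employed for Theorems~\ref{fp_main} and~\ref{avi_main}: run the given iteration alongside its projected partner, show the two sequences stay within a finite $\rho$-distance of one another, and then transfer boundedness of $\{\tilde{x}_n\}$ to $\{x_n\}$ via hypothesis~3.

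First I would exploit hypothesis~4. Since $\tilde{x}_n \to x^* \in \mathcal{B}$ and $\mathcal{B}$ is open with $\overline{\mathcal{B}} \subset \mathcal{C}$, there is a sample-path-dependent index $N$ with $\tilde{x}_n \in \mathcal{B} \subseteq \mathcal{C}$ for all $n \ge N$. By the definition of $\pro_{\mathcal{B},\mathcal{C}}$ we then have $\pro_{\mathcal{B},\mathcal{C}}(\tilde{x}_n) = \{\tilde{x}_n\}$ for such $n$, so for $n \ge N$ the projected scheme reduces to $\tilde{x}_{n+1} \in G_n(\tilde{x}_n, \xi_n)$. From stage $N$ on, $\{x_n\}$ and $\{\tilde{x}_n\}$ are thus driven by the \emph{same} random maps $G_n(\cdot, \xi_n)$, so --- unlike in Section~\ref{sec_gn} --- the noise $\{\xi_n\}$ passes through the comparison step and no probabilistic estimate is needed here.

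Next I would establish the one-step comparison bound. Fix $n \ge N$. Since $x_{n+1} \in G_n(x_n, \xi_n)$ and $\tilde{x}_{n+1} \in G_n(\tilde{x}_n, \xi_n)$, pick $u^* \in G_n(x_n,\xi_n)$ and $v^* \in G_n(\tilde{x}_n,\xi_n)$ with $\rho(u^*, v^*) \le H_\rho\!\left(G_n(x_n,\xi_n),\, G_n(\tilde{x}_n,\xi_n)\right)$. Using the bounded-diameter hypothesis on $G_n$ (so $\rho(x_{n+1}, u^*) \le D$ and $\rho(v^*, \tilde{x}_{n+1}) \le D$) together with the $\rho$-contractivity of $G_n$ in its first argument, the triangle inequality gives
\[
\rho(x_{n+1}, \tilde{x}_{n+1}) \le \rho(x_{n+1}, u^*) + \rho(u^*, v^*) + \rho(v^*, \tilde{x}_{n+1}) \le 2D + \alpha\,\rho(x_n, \tilde{x}_n).
\]
A two-case analysis exactly as in the proof of Theorem~\ref{fp_main} (split according to whether $2D \le (1-\alpha)\rho(x_n, \tilde{x}_n)$ or not) upgrades this, by induction on $n \ge N$, to
\[
\rho(x_n, \tilde{x}_n) \le \rho(x_N, \tilde{x}_N) \vee \frac{2D}{1-\alpha}, \qquad n \ge N.
\]
Hence $\sup_{n \ge N}\rho(x_n, \tilde{x}_n) < \infty$ a.s., and by hypothesis~3, $\sup_{n \ge N}\lVert x_n - \tilde{x}_n\rVert < \infty$ a.s. Since $\{\tilde{x}_n\}$ converges it is bounded, so $\{x_n\}_{n \ge N}$ is bounded a.s.; as the initial segment $x_0, \dots, x_N$ and the index $N$ are almost surely finite, $\sup_{n \ge 0}\lVert x_n\rVert < \infty$ a.s.

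I expect the main obstacle to be the middle step --- making the one-step comparison rigorous for \emph{set-valued} $G_n$. One must choose the selections $u^*, v^*$ realising (or, if the values of $G_n$ need not be closed, approximately realising) the Hausdorff distance and use the bounded-diameter bound to absorb the arbitrariness of $x_{n+1}$ and $\tilde{x}_{n+1}$ inside $G_n(x_n,\xi_n)$ and $G_n(\tilde{x}_n,\xi_n)$; this is precisely the mechanism producing the irreducible $\vee\,\tfrac{2D}{1-\alpha}$ term, the residual gap that the projective partner cannot close. A secondary, easier point is confirming that the projective scheme genuinely stops projecting, which rests on $x^*$ lying in the \emph{open} set $\mathcal{B}$ together with $\overline{\mathcal{B}} \subset \mathcal{C}$.
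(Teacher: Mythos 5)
Your proposal is correct and follows essentially the same route as the paper: use convergence of the projected iterates to find $N$ after which the projection is inactive, derive the one-step bound $\rho(x_{n+1},\tilde{x}_{n+1}) \le 2D + \alpha\,\rho(x_n,\tilde{x}_n)$ from the Hausdorff contraction plus bounded diameter, and deduce $\sup_n \rho(x_n,\tilde{x}_n) < \infty$. The only cosmetic difference is that the paper unfolds the recursion into a geometric series to get the bound $\tfrac{2D}{1-\alpha} + \rho(x_N,\tilde{x}_N)$, whereas you use the two-case maximum argument from Theorem~\ref{fp_main}; both yield the same conclusion.
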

\begin{proof} {\footnotesize PROOF:}
 Since $\hat{x}_n \to x^*$ as $n \to \infty$, there exists $N$ such that $\hat{x}_n \in \mathcal{B}$
 for all $n \ge N$. For $k \ge 0$ we have the following:
 \[
  \rho(x_{n+k+1}, \hat{x}_{n+k+1}) \le 2\hat{D} + \alpha \rho(x_{n+k}, \hat{x}_{n+k}).
 \]
The arguments used to obtain the above inequality are identical to the ones used to obtain \eqref{avi_2epsilon} in the proof of Lemma~\ref{avi_cc} in Section~\ref{sec_ana_avi}.
Unfolding the right hand side down to stage $n$ we get the following:
\[
 \rho(x_{n+k+1}, \hat{x}_{n+k+1}) \le \left(1 + \alpha + \ldots + \alpha ^k \right)\ 2\hat{D} + \alpha ^{k+1} \rho(x_{n}, \hat{x}_{n}),
\]
\[
 \rho(x_{n+k+1}, \hat{x}_{n+k+1}) \le  \frac{2\hat{D}}{1-\alpha} + \alpha ^{k+1} \rho(x_{n}, \hat{x}_{n}).
\]
Since $0 < \alpha < 1$, we have
\[
 \sup \limits_{n \ge N+1} \rho(x_{n}, \hat{x}_{n}) \le \frac{2\hat{D}}{1-\alpha} + \rho(x_{N}, \hat{x}_{N}).
\]  \hfill $\blacksquare$
\end{proof}
\section{Finding fixed points of set-valued maps and Abstract Dynamic Programming} \label{sec_fp}
In Section~\ref{sec_ana_avi} we showed that the AVI algorithm given by (\ref{avi_avi}) is stable, and converges to a small neighborhood of the optimal cost-to-go vector $J^*$. For this, we started by
observing
that the fixed points of the perturbed Bellman operator belong to a small neighborhood of $J^*$
as a consequence of the upper semicontinuity of attractor sets. Then we showed that \eqref{avi_avi}
converges to a fixed point of the perturbed Bellman operator, thereby showing that 
\eqref{avi_avi} converges to a small neighborhood of $J^*$. In this section, we generalize the ideas of Sections~\ref{sec_avi} and \ref{sec_ana_avi} to develop and analyze an iterative algorithm for finding fixed points of general contractive set-valued maps.

To motivate the requirement of such a fixed point finding algorithm, we consider the field of Abstract Dynamic Programming which is an important extension of classical Dynamic Programming (DP), wherein the Bellman operator is defined in a more abstract manner, see \cite{AbstractDP}. As in classical DP, we are given: state space $\mathcal{X}$, action space $\mathcal{A}$, set of cost funtions $\mathbb{R}(\mathcal{X}) := \{J \mid J: \mathcal{X} \to \mathbb{R}\}$
and a set of valid policies $\mathcal{M} := \{\mu \mid \mu: \mathcal{X} \to \mathcal{A}\}$. Instead of a single-stage cost function which is then used to define the Bellman operator, in Abstract DP, one is given a function $H: \mathcal{X} \times \mathcal{A} \times \mathbb{R}(\mathcal{X}) \to \mathbb{R}$. The Bellman operators are defined as follows:
\[
T_\mu J(x) := H(x, \mu(x), J);
\]
\[
T J(x) := \min \limits_{\mu \in \mathcal{M}} T_\mu J(x) = \min \limits_{\mu \in \mathcal{M}} H(x, \mu(x), J).
\]
In \cite{AbstractDP}, Bertsekas has presented sufficient conditions for the existence of a fixed point of the above Bellman operator and for its optimality.  Algorithms for finding fixed points have an important role to play in Abstract DP. If we allow $H$ to be set-valued or if $H$ can only be estimated with non-zero bias, then the algorithm presented in this section can be helpful.

Suppose that we are given a set-valued map 
$T: \mathbb{R}^d \to \{\text{subsets of }\mathbb{R}^d\}$ (need not be a ``set-valued counterpart of  the Bellman operator''). We present sufficient conditions under which
the following stochastic approximation algorithm is bounded a.s. and converges to a fixed point of $T$:
\begin{equation}
 \label{fp_fp}
 x_{n+1} = x_n + a(n) \left[ y_n + M_{n+1} \right],
\end{equation}
where\\
(i) $y_n \in Tx_n - x_n$ for all $n \ge 0$. \\
(ii) $\{a(n)\}_{n \ge 0}$ is the given step-size sequence satisfying $(AV4)$.\\
(iii) $\{M_{n+1}\}_{n \ge 0}$ is the martingale difference noise sequence satisfying $(AV5)$.

$\\$
\textit{\textbf{Definitions:}} 
\begin{enumerate}
\item Given a metric space $(\mathbb{R}^d, \rho)$, recall the Hausdorff metric with respect to $\rho$
as follows:
\begin{equation}
 \nonumber
 H_\rho(A,B) := \left(\inf \limits_{x \in A} \overline{\rho}(x, B) \right) \vee \left(\inf \limits_{y \in B} \overline{\rho}(y, A)
 \right),
\end{equation}
where $A, B \subset \mathbb{R}^d$ and
$\overline{\rho}(u,C) := \min \{\rho(u,v) \mid v \in C\}$ for any $u \in \mathbb{R}^d$ and $C \subseteq \mathbb{R}^d$.
\item We call a set-valued map $T$ as contractive if and only if $H_\rho(Tx, Ty) \le \ \alpha \rho(x,y)$,
where $x, y \in \mathbb{R}^d$ and $0 < \alpha < 1$.
\item We say that $T$ is of \textit{bounded diameter}
if and only if $diam(Tx) \le \overline{D}$, $\forall \ x \in \mathbb{R}^d$ and given $0 < \overline{D} < \infty$. Here
$diam(A) := \sup \{ \rho(z_1, z_2) \mid z_1, z_2 \in A\}$ for any $A \subset \mathbb{R}^d$.
\end{enumerate}

$\\$
We impose the following restrictions on (\ref{fp_fp}):
\begin{itemize}
 \item[\textbf{(AF1)}] $T$ is a Marchaud map that is of \textit{bounded diameter} and \textit{contractive} 
 with respect to some metric $\rho$.
 \item[\textbf{(AF2)}] The metric $\rho$ is such that $\lVert x - y \rVert \le \ C\ \rho(x,y)$ for $x, y \in \mathbb{R}^d$, 
 $C>0$.
 \item[\textbf{(AF3)}] Let $F := \{x \mid x \in Tx\}$ denote the set of fixed points of $T$. There exists a
 compact subset $F' \subseteq F$ along with a strongly positive invariant bounded open neighborhood.
 \begin{center} \textit{OR} \end{center}
 $F$ is the unique global attractor of $\dot{x}(t) \in Tx(t) - x(t)$.
\end{itemize}
$\\$
Since $T$ is assumed to be contractive with respect to $\rho$, it follows from \textit{Theorem 5}
of Nadler \cite{nadler} that $T$ has at least one fixed point. Assumption $(AF2)$ is readily satisfied
by the popular metric norms such as the weighted p-norms and the weighted max-norms among others.
Assumption $(AF3)$ is imposed to ensure that (\ref{fp_fp}) satisfies $(A3b)$ or $(A3c)$. Specifically, $(AF3)$
is imposed to ensure the existence of an inward directing set associated with 
$\dot{x}(t) \in Tx(t) - x(t)$, see Proposition~\ref{sa_inward} for details. In other words, we can find 
bounded open sets $\mathcal{C}_F$ and $\mathcal{B}_F$ such that $\mathcal{C}_F$ is inward directing and
$\overline{\mathcal{B}_F} \subset \mathcal{C}_F$.
\paragraph{}
As in Section~\ref{sec_avi}, we compare (\ref{fp_fp}) with it's projective counterpart given by:
\begin{equation}
 \label{fp_proj}
 \begin{split}
  &\tilde{x}_{n+1} = \hat{x}_n + a(n) \left(y_n + M_{n+1}\right),\\
 &\hat{x}_{n+1} \in \underset{\footnotesize{\mathcal{B}_F, \mathcal{C}_F}}{\pro} (\tilde{x}_{n+1}),
 \end{split}
\end{equation}
where $y_n \in T\hat{x}_n - \hat{x}_n$, $\{ M_{n+1}\}_{n \ge 0}$ is identical for both (\ref{fp_fp})
and (\ref{fp_proj}) and $\underset{\footnotesize{\mathcal{B}_F, \mathcal{C}_F}}{\pro} (\cdotp)$ is the projection operator
defined at the beginning of Section~\ref{sec_analysis}. The analysis of the above projective
scheme proceeds in an identical manner as in Section~\ref{sec_analysis}. Specifically, we may show that
every limit point of the projective scheme (\ref{fp_proj}) belongs to $\overline{\mathcal{B}_F}$.
The following theorem is immediate.
\begin{theorem}
 \label{fp_main}
 Under $(AF1)$-$(AF3)$ and $(AV5)$, the iterates given by (\ref{fp_fp}) are bounded almost surely. Further,
 any limit point of (\ref{fp_fp}) (as $n \to \infty$) is a fixed point of the set-valued map $T$.
\end{theorem}
\begin{proof} {\footnotesize PROOF:}
 The proof of this theorem proceeds in a similar manner to that of Theorem~\ref{avi_main}.
 We only provide an outline here to avoid repetition. We begin by showing that
 (\ref{fp_fp}) is bounded almost surely (stable) by comparing it to (\ref{fp_proj}). Since the limit
 points of (\ref{fp_proj}) belong to $\overline{\mathcal{B}_F}$, there exists $N$, possibly sample
 path dependent, such that $\hat{x}_n \in \overline{\mathcal{C}_F}$ for all $n \ge N$. For $k \ge 0$,
 we have the following set of inequalities:
 \begin{equation*}
 \begin{split}
  \rho(x_{n+k+1}, \hat{x}_{n+k+1}) &\le (1 - a(n+k)) \rho(x_{n+k}, \hat{x}_{n+k}) + a(n+k)
  \left( 2D + H_\rho(Tx_{n+k}, T\hat{x}_{n+k})\right),
\\
   &\le (1 - a(n+k)) \rho(x_{n+k}, \hat{x}_{n+k}) + a(n+k)
  \left( 2D + \alpha \rho(x_{n+k}, \hat{x}_{n+k})\right),
 \end{split}
 \end{equation*}
where $diam(Tx) \le D$ for every $x \in \mathbb{R}^d$. Recall that
$0 < \alpha < 1$ is the contraction parameter of the set-valued map $T$. We consider two possible cases.
\\
\textit{\textbf{Case 1. $2D \le (1-\alpha) \rho(x_{n+k}, \hat{x}_{n+k})$ :}}
In this case, it can be shown that
\[
 \rho(x_{n+k+1}, \hat{x}_{n+k+1}) \le \rho(x_{n+k}, \hat{x}_{n+k}).
\]
\textit{\textbf{Case 2. $2D > (1-\alpha) \rho(x_{n+k}, \hat{x}_{n+k})$ :}}
In this case, it can be shown that
\[
 \rho(x_{n+k+1}, \hat{x}_{n+k+1}) \le \frac{2D}{1-\alpha}.
\]
We conclude the following:
\[
  \rho(x_n, \hat{x}_n) \le \left( \frac{2D}{1-\alpha} \right) \vee \rho(x_N, \hat{x}_N)  ,\ n \ge N.
\]
It follows from the above inequality and $(AF2)$ that (\ref{fp_fp}) satisfies assumption $(A4)$. 
Hence, we get that $\{x_n\}_{n \ge 0}$ is bounded almost surely (stable).

Since the iterates are stable, it follows from \textit{[Theorem 2, Chapter 6,\cite{Aubin}]} that
every limit point of (\ref{fp_fp}) is an equilibrium point of the set-valued map $x \mapsto Tx - x$.
In other words, if $x^*$ is a limit point of (\ref{fp_fp}), then $0 \in Tx^* -  x^*$, \textit{i.e.,}
$x^* \in Tx^*$. Hence we have shown that every limit point of (\ref{fp_fp}) is a fixed point of the set-valued
map $T$. \hfill $\blacksquare$
\end{proof}$\\$
\begin{remark}
 \label{fp_remark}
 It is assumed that $T$ is of bounded diameter, see $(AF1)$. This assumption is primarily required to show the almost sure boundedness of (\ref{fp_fp}). Specifically, it is used to show that
 $(A4)$ is satisfied. Depending on the problem at hand, one may wish to do away with this 
 ``bounded diameter'' assumption. For example, if we have $\sup \limits_{n \ge 0} \ diam(Tx_n) < \infty$
 a.s. instead, the bounded diameter assumption can be dispensed with.
 
 Since $T$ is Marchaud, it is point-wise bounded, \textit{i.e.,}
 $\sup \limits_{z \in Tx} \lVert z \rVert \le K(1 + \lVert x \rVert)$, where $K >0$. In other words,
 $diam(Tx) \le 2 K(1 + \lVert x \rVert)$. In principle, the point-wise boundedness of $T$ does allow for 
 unbounded diameters, \textit{i.e.,} $diam(Tx) \uparrow \infty$ as $\lVert x \rVert \uparrow \infty$.
 Our bounded diameter assumption prevents this scenario from happening. In applications that use
 ``approximation operators'', it is often reasonable to assume that the errors (due to approximations) are bounded. Then 
 the ``associated set-valued map'' is naturally of bounded diameter. The reader is referred to
 Section~\ref{sec_avi} for an example of this setting.
\end{remark}

\section{Conclusions}
\label{sec_conclusions}

We analyzed the stability and convergence behaviors of the Approximate Value Iteration (AVI) method from Reinforcement Learning. Such approaches utilize an approximation of the Bellman operator within a fixed point finding iteration. We modelled the approximation errors as an additive non-diminishing random process that is asymptotically bounded. We were motivated by the use of neural networks as function approximators, usually trained in an online manner to minimize these errors. Although it is improbable that the approximation errors completely vanish, it is fair to expect that they remain bounded. This is because, unbounded errors mean that the Bellman operator is approximated very poorly at some points, and that the difference between the true and the approximate operator is infinite. Evaluating the Bellman operator requires taking expectations. Within the framework of deep RL, expectations are replaced by samples. Our analysis accounts for the sampling errors by modelling them as an additive martingale difference term, which is shown to vanish asymptotically.

An important contribution of our work is providing the set of Lyapunov function based stability conditions. In addition to using them to show the stability of AVI, we presented a stability analysis of general set-valued SAs based on the aforementioned assumptions. Regarding convergence of AVI, we showed that the limit is a fixed point of the perturbed Bellman operator. Further, it belongs to a small neighborhood of the optimal cost-to-go vector $J^*$.


In the future we would like to extend the ideas in this paper to consider on-policy reinforcement learning algorithms with function approximations. Additionally, we believe that our ideas can be extended to develop and analyze algorithms that solve constrained Markov decision processes. 
Finally, it would be interesting to see if the general Lyapunov function based stability conditions can be extended to stochastic approximations with set-valued maps and Markovian noise.

\bibliographystyle{plain} 
\bibliography{references}



\end{document}